\providecommand{\U}[1]{\protect\rule{.1in}{.1in}}
\providecommand{\U}[1]{\protect\rule{.1in}{.1in}}
\newtheorem{theorem}{Theorem}
\theoremstyle{plain}
\newtheorem{definition}{Definition}
\newtheorem{lemma}{Lemma}
\numberwithin{equation}{section}
\begin{document}
\title[Quantum Integrals of Motion]{Quantum Integrals of Motion for Variable
Quadratic Hamiltonians}
\author{Ricardo Cordero-Soto}
\address{Mathematical, Computational and Modeling Sciences Center, Arizona
State University, Tempe, AZ 85287--1804, U.S.A.}
\email{ricardojavier81@gmail.com}
\author{Erwin Suazo}
\address{Department of Mathematical Sciences, University of Puerto Rico,
Mayaquez, call box 9000, PR 00681--9000, Puerto Rico}
\email{erwin.suazo@upr.edu}
\author{Sergei K. Suslov}
\address{School of Mathematical and Statistical Sciences \& Mathematical,
Computational and Modeling Sciences Center, Arizona State University, Tempe,
AZ 85287--1804, U.S.A.}
\email{sks@asu.edu}
\urladdr{http://hahn.la.asu.edu/\symbol{126}suslov/index.html}
\date{\today }
\subjclass{Primary 81Q05, 35C05. Secondary 42A38}
\keywords{The time-dependent Schr\"{o}dinger equation, Cauchy initial value
problem, Green function, propagator, quantum damped oscillators,
Caldirola-Kanai Hamiltonians, quantum integrals of motion, Lewis--Riesenfeld
dynamical invariant, Ermakov's equation, Ehrenfest's theorem}

\begin{abstract}
We construct integrals of motion for several models of the quantum damped
oscillators in a framework of a general approach to the time-dependent Schr%
\"{o}dinger equation with variable quadratic Hamiltonians. An extension of
the Lewis--Riesenfeld dynamical invariant is given. The time-evolution of
the expectation values of the energy related positive operators is
determined for the oscillators under consideration. A proof of uniqueness of
the corresponding Cauchy initial value problem is discussed as an
application.
\end{abstract}

\maketitle

\section{An Introduction}

Evolution of a nonrelativistic quantum system from a given initial state to
the final state is governed by the (time-dependent) Schr\"{o}dinger
equation. Unfortunately, its explicit solutions are available only for the
simplest Hamiltonians and, in general, one has to rely on a variety of
approximation, asymptotic and numerical methods. Luckily among the
integrable cases are the so-called quadratic Hamiltonians that attracted
substantial attention over the years in view of their great importance to
many advanced quantum problems. Examples can be found in quantum and
physical optics \cite{Delgadoetal98}, \cite{Klauder:Sudarshan}, \cite%
{PadillaMaster}, \cite{Reithmaieretal97}, physics of lasers and masers \cite%
{Sargent:Scully:Lamb74}, \cite{Tarasov83}, \cite{Scully:Zubairy97}, \cite%
{Walls94}, molecular spectroscopy \cite{Dokt:Mal:Man77}, quantum chemistry,
quantization of mechanical systems \cite{Degas:Ruijsenaars01}, \cite%
{Faddeyev69}, \cite{FeynmanPhD}, \cite{Feynman}, \cite{Fey:Hib}, \cite%
{Kochan07}, \cite{Kochan10} and Hamiltonian cosmology \cite%
{Bertoni:Finelli:Venturi98}, \cite{Finelli:Gruppuso:Venturi99}, \cite%
{Finelli:Vacca:Venturi98}, \cite{Hawkins:Lidsey02}, \cite{IacobF}, \cite%
{PadillaMaster}, \cite{Rosu:Espinoza99}, \cite{Rosu:Espinoza:Reyes99}, \cite%
{Ryan72}. They include coherent states \cite{Malkin:Man'ko79}, \cite%
{Malk:Man:Trif69}, \cite{Malk:Man:Trif70}, \cite{Klauder:Sudarshan} and
Berry's phase \cite{Berry85}, \cite{Berry:Hannay88}, \cite%
{Cervero:Lejarreta89}, \cite{Hannay85}, \cite{Leach90}, \cite{Morales88},
asymptotic and numerical methods \cite{Goyaletal93}, \cite{Kamenshchiketal06}%
, \cite{Kruskal62}, \cite{Milne30}, \cite{Mun:Ru-Paz:Wolf09}, charged
particle traps \cite{Major:Gheorghe:Werth} and motion in uniform magnetic
fields \cite{Cor-Sot:Lop:Sua:Sus}, \cite{Corant:Snyder58}, \cite%
{Dodonov:Man'koFIAN87}, \cite{La:Lif}, \cite{Lewis67}, \cite{Lewis68}, \cite%
{Lewis:Riesen69}, \cite{Malk:Man:Trif70}, polyatomic molecules in varying
external fields, crystals through which an electron is passing and exciting
the oscillator modes and other interactions of the modes with external
fields \cite{Fey:Hib}. Quadratic Hamiltonians have particular applications
in quantum electrodynamics because the electromagnetic field can be
represented as a set of forced harmonic oscillators \cite{Bo:Shi}, \cite%
{Fey:Hib}, \cite{Dodonov:Man'koFIAN87}, \cite{Gottf:T-MY}, \cite%
{Ivan:Mal:Man74}, and \cite{Merz}. Nonlinear oscillators play a central role
in the novel theory of Bose--Einstein condensation \cite%
{Dal:Giorg:Pitaevski:Str99} based on the nonlinear Schr\"{o}dinger (or
Gross--Pitaevskii) equation \cite{Kagan:Surkov:Shlyap96}, \cite%
{Kagan:Surkov:Shlyap97}, \cite{Kivsh:Alex:Tur01}, \cite{Per-G:Tor:Mont}%
.\medskip\ 

The one-dimensional Schr\"{o}dinger equation with variable quadratic
Hamiltonians of the form%
\begin{equation}
i\frac{\partial \psi }{\partial t}=-a\left( t\right) \frac{\partial ^{2}\psi 
}{\partial x^{2}}+b\left( t\right) x^{2}\psi -i\left( c\left( t\right) x%
\frac{\partial \psi }{\partial x}+d\left( t\right) \psi \right) ,
\label{in1}
\end{equation}%
where $a\left( t\right) ,$ $b\left( t\right) ,$ $c\left( t\right) ,$ and $%
d\left( t\right) $ are real-valued functions of time $t$ only, can be
integrated in the following manner (see, for example, \cite%
{Cor-Sot:Lop:Sua:Sus}, \cite{Cor-Sot:Sua:Sus}, \cite{Cor-Sot:Sus}, \cite%
{Dod:Mal:Man75}, \cite{Lan:Sus}, \cite{Lop:Sus}, \cite{Me:Co:Su}, \cite%
{SuazoF}, \cite{Sua:Sus}, \cite{Suaz:Sus}, \cite{Sua:Sus:Vega}, \cite{Wolf81}%
, and \cite{Yeon:Lee:Um:George:Pandey93} for a general approach and some
elementary solutions). The Green functions, or Feynman's propagators, are
given by \cite{Cor-Sot:Lop:Sua:Sus}, \cite{Suaz:Sus}:%
\begin{equation}
\psi =G\left( x,y,t\right) =\frac{1}{\sqrt{2\pi i\mu \left( t\right) }}\
e^{i\left( \alpha \left( t\right) x^{2}+\beta \left( t\right) xy+\gamma
\left( t\right) y^{2}\right) },  \label{in2}
\end{equation}%
where%
\begin{eqnarray}
&&\alpha \left( t\right) =\frac{1}{4a\left( t\right) }\frac{\mu ^{\prime
}\left( t\right) }{\mu \left( t\right) }-\frac{d\left( t\right) }{2a\left(
t\right) },  \label{in3} \\
&&\beta \left( t\right) =-\frac{h\left( t\right) }{\mu \left( t\right) }%
,\qquad h\left( t\right) =\exp \left( -\int_{0}^{t}\left( c\left( \tau
\right) -2d\left( \tau \right) \right) \ d\tau \right) ,  \label{in4} \\
&&\gamma \left( t\right) =\frac{a\left( t\right) h^{2}\left( t\right) }{\mu
\left( t\right) \mu ^{\prime }\left( t\right) }+\frac{d\left( 0\right) }{%
2a\left( 0\right) }-4\int_{0}^{t}\frac{a\left( \tau \right) \sigma \left(
\tau \right) h^{2}\left( \tau \right) }{\left( \mu ^{\prime }\left( \tau
\right) \right) ^{2}}\ d\tau  \label{in5}
\end{eqnarray}%
and the function $\mu \left( t\right) $ satisfies the so-called\
characteristic equation%
\begin{equation}
\mu ^{\prime \prime }-\tau \left( t\right) \mu ^{\prime }+4\sigma \left(
t\right) \mu =0  \label{in6}
\end{equation}%
with%
\begin{equation}
\tau \left( t\right) =\frac{a^{\prime }}{a}-2c+4d,\qquad \sigma \left(
t\right) =ab-cd+d^{2}+\frac{d}{2}\left( \frac{a^{\prime }}{a}-\frac{%
d^{\prime }}{d}\right)  \label{in7}
\end{equation}%
subject to the initial data%
\begin{equation}
\mu \left( 0\right) =0,\qquad \mu ^{\prime }\left( 0\right) =2a\left(
0\right) \neq 0.  \label{in8}
\end{equation}%
(More details can be found in Refs.~\cite{Cor-Sot:Lop:Sua:Sus}, \cite%
{Suaz:Sus} and a Hamiltonian structure is considered in Refs.~\cite{Berry85}%
, \cite{Cor-Sot:Sus}.) Then, by the superposition principle, solution of the
Cauchy initial value problem can be presented in an integral form%
\begin{equation}
\psi \left( x,t\right) =\int_{-\infty }^{\infty }G\left( x,y,t\right) \
\varphi \left( y\right) \ dy,\quad \lim_{t\rightarrow 0^{+}}\psi \left(
x,t\right) =\varphi \left( x\right)  \label{CauchyInVProb}
\end{equation}%
for a suitable initial function $\varphi $ on $%
\mathbb{R}
$ (a rigorous proof is given in Ref.~\cite{Suaz:Sus} and uniqueness is
analyzed in this paper).\medskip

We discuss integrals of motion for several particular models of the damped
and generalized quantum oscillators. The simple harmonic oscillator is of
interest in many quantum problems \cite{Fey:Hib}, \cite{La:Lif}, \cite{Merz}%
, and \cite{Schiff}. The forced harmonic oscillator was originally
considered by Richard Feynman in his path integrals approach to the
nonrelativistic quantum mechanics \cite{FeynmanPhD}, \cite{Feynman}, \cite%
{Feynman49a}, \cite{Feynman49b}, and \cite{Fey:Hib}; see also \cite{Lop:Sus}%
. Its special and limiting cases were discussed in Refs.~\cite{Beauregard}, 
\cite{Gottf:T-MY}, \cite{Holstein}, \cite{Maslov:Fedoriuk}, \cite{Merz}, 
\cite{Thomber:Taylor} for the simple harmonic oscillator and in Refs.~\cite%
{Arrighini:Durante}, \cite{Brown:Zhang}, \cite{Holstein97}, \cite{Nardone}, 
\cite{Robinett} for the particle in a constant external field; see also
references therein. The damped oscillations have been studied to a great
extent in classical mechanics~\cite{Bateman31}, \cite{BatemanPDE} and \cite%
{Lan:Lif}. Their quantum analogs are introduced and analyzed from different
viewpoints by many authors; see, for example, \cite{Caldirola41}, \cite%
{Chand:Senth:Laksh07}, \cite{ChruAnnPhys06}, \cite{Chru:JurkAnnPhys06}, \cite%
{Chru:Jurk}, \cite{Cor-Sot:Sua:Sus}, \cite{Denman66}, \cite{Dekker81}, \cite%
{Dito:Turr06}, \cite{Dod:Man79}, \cite{Dod:Miz:Dod}, \cite{LeachAmJPhys78}, 
\cite{LeachSIAM78}, \cite{Kanai48}, \cite{Mont03}, \cite{Nieto:Truax}, \cite%
{Svin75}, \cite{Svin76}, \cite{Tarasov01}, \cite{Um:Yeon:George}, and
references therein. The quantum parametric oscillator with variable
frequency is also largely studied in view of its physical importance; see,
for example, \cite{Cher:Man08}, \cite{Dodonov:Man'koFIAN87}, \cite{HusimiI53}%
, \cite{HusimiII53}, \cite{Lan:Sus}, \cite{Malkin:Man'ko79}, \cite%
{Malk:Man:Trif70}, \cite{Perelomov:Popov69}, \cite{Per:Zel}, \cite%
{Popov:PerelomovI69}, \cite{Popov:PerelomovII69}, \cite{Schuch08}, and \cite%
{Solimenoetal69}; a detailed bibliography is given in \cite{Camizetall71}%
.\medskip

In the present paper we revisit a familiar topic of the quantum integrals of
motion for the time-dependent Schr\"{o}dinger equation%
\begin{equation}
i\frac{\partial \psi }{\partial t}=H\left( t\right) \psi  \label{in9}
\end{equation}%
with variable quadratic Hamiltonians of the form 
\begin{equation}
H=a\left( t\right) p^{2}+b\left( t\right) x^{2}+d\left( t\right) \left(
px+xp\right) ,  \label{in10}
\end{equation}%
where $p=-i\partial /\partial x,$ $\hslash =1$ and $a\left( t\right) ,$ $%
b\left( t\right) ,$ $c\left( t\right) =2d\left( t\right) $ are some
real-valued functions of time only (see, for example, \cite{Dod:Mal:Man75}, 
\cite{Leach90}, \cite{Lewis:Riesen69}, \cite{Malk:Man:Trif70}, \cite%
{Malk:Man:Trif73}, \cite{Wolf81}, \cite{Yeon:Lee:Um:George:Pandey93} and
references therein). A related energy operator $E$ is defined in a
traditional way as a quadratic in $p$ and $x$ operator that has constant
expectation values \cite{Dodonov:Man'koFIAN87}:%
\begin{equation}
\frac{d}{dt}\left\langle E\right\rangle =\frac{d}{dt}\int_{-\infty }^{\infty
}\psi ^{\ast }E\psi \ dx=0.  \label{in11}
\end{equation}%
It is well-known that such quadratic invariants are not unique. Although an
elegant general solution is known, say, for the parametric oscillator, it
involves an integration of nonlinear Ermakov's equation \cite{Lewis:Riesen69}%
. Here the simplest energy operators are constructed for several integrable
models of the damped and modified quantum oscillators. Then an extension of
the familiar Lewis--Riesenfeld quadratic invariant is given to the most
general case of the variable non-self-adjoint quadratic Hamiltonian (see
also \cite{Leach90}, \cite{Wolf81}, \cite{Yeon:Lee:Um:George:Pandey93}, we
do not use canonical transformations and deal only with real-valued
solutions of the corresponding generalized Ermakov system), which seems to
be missing in the available literature and may be considered as the main
result of this paper. (An attempt to collect relevant references is made%
\footnote{%
A complete bibliography on classical and quantum generalized harmonic
oscillators, their invariants, group-theoretical methods and applications is
very extensive. Only case of the damped oscillators in \cite{Dekker81}
includes about 600 references!}.) Group-theoretical aspects will be
discussed elsewhere, we only provide the factorization of the general
quadratic invariant (see also \cite{Suslov10}).\medskip\ 

In general the average $\left\langle E\right\rangle $ is not positive. A
complete dynamics of the expectation values of some energy-related positive
operators is found instead for each model, which is a somewhat interesting
result on its own. In addition to other works \cite{Berry85}, \cite%
{Dodonov:Man'koFIAN87}, \cite{Dod:Mal:Man75}, \cite{Hannay85}, \cite%
{Lewis:Riesen69}, \cite{Malkin:Man'ko79}, \cite{Malk:Man:Trif73}, \cite%
{Wolf81}, \cite{Yeon:Lee:Um:George:Pandey93} these advances allow us to
discuss uniqueness of the corresponding Cauchy initial value problem for the
special models and for the general quadratic Hamiltonian under consideration
as a modest contribution to this well-developed area of quantum mechanics
and partial differential equations. Further relations of the quadratic
invariants with the solution of the initial value problem are discussed in
the forthcoming paper \cite{Suslov10}.\medskip

The paper is organized as follows. In Section~2 we review several exactly
solvable models of the damped and generalized oscillators in quantum
mechanics. Some of these \textquotedblleft exotic\textquotedblright\
oscillators with variable quadratic Hamiltonians appear to be missing,
and/or are just recently introduced, in the available literature. The
corresponding Green functions are found in terms of elementary functions.
The dynamical invariants and quadratic energy-related operators are
discussed in Sections~3 and 4. The last section is concerned with an
application to the Cauchy initial value problems. The classical equations of
motion for the expectation values of the position operator for the quantum
oscillators under consideration are derived in Appendix~A. The Heisenberg
uncertainty relation and linear dynamic invariants are revisited,
respectively, in Appendices~B and C. Solutions of a required differential
equation are given in Appendix~D to make our presentation is as
self-contained as possible.

\section{Some Integrable Quadratic Hamiltonians}

Quantum systems with the Hamiltonians (\ref{in10}) are called the
generalized harmonic oscillators \cite{Berry85}, \cite{Dod:Mal:Man75}, \cite%
{Hannay85}, \cite{Leach90}, \cite{Wolf81}, \cite{Yeon:Lee:Um:George:Pandey93}%
. In this paper we concentrate, among others, on the following variable
Hamiltonians: the Caldirola-Kanai Hamiltonian of the quantum damped
oscillator \cite{Caldirola41}, \cite{Dekker81}, \cite{Kanai48}, \cite%
{Um:Yeon:George} and some of its natural modifications, a modified
oscillator introduced by Meiler, Cordero-Soto and Suslov \cite{Me:Co:Su}, 
\cite{Cor-Sot:Sus}, the quantum damped oscillator of Chru\'{s}ci\'{n}ski and
Jurkowski \cite{Chru:Jurk} in the coordinate and momentum representations
and a quantum-modified parametric oscillator which is believed to be new.
The Green functions are derived in a united way.

\subsection{The Caldirola-Kanai Hamiltonian}

A model of the quantum damped oscillator with a variable Hamiltonian of the
form%
\begin{equation}
H=\frac{\omega _{0}}{2}\left( e^{-2\lambda t}\ p^{2}+e^{2\lambda t}\
x^{2}\right)  \label{CKham}
\end{equation}%
is called the Caldirola-Kanai model \cite{Bateman31}, \cite{Caldirola41}, 
\cite{Dekker81}, \cite{Kanai48}, \cite{Um:Yeon:George}. Nowadays it is a
standard way of adding friction to the quantum harmonic oscillator. The
Green function is given by%
\begin{equation}
G\left( x,y,t\right) =\sqrt{\frac{\omega e^{\lambda t}}{2\pi i\omega
_{0}\sin \omega t}}\ e^{i\left( \alpha \left( t\right) x^{2}+\beta \left(
t\right) xy+\gamma \left( t\right) y^{2}\right) },\quad \omega =\sqrt{\omega
_{0}^{2}-\lambda ^{2}}>0,  \label{sm1}
\end{equation}%
where%
\begin{eqnarray}
\alpha \left( t\right) &=&\frac{\omega \cos \omega t-\lambda \sin \omega t}{%
2\omega _{0}\sin \omega t}e^{2\lambda t},  \label{sm2} \\
\beta \left( t\right) &=&-\frac{\omega }{\omega _{0}\sin \omega t}e^{\lambda
t},  \label{sm3} \\
\gamma \left( t\right) &=&\frac{\omega \cos \omega t+\lambda \sin \omega t}{%
2\omega _{0}\sin \omega t}.  \label{sm4}
\end{eqnarray}%
This popular model had been studied in detail by many authors from different
viewpoints; see, for example, \cite{Antonsen}, \cite{Britt50}, \cite%
{Cari:Luc:Ra08}, \cite{Cari:Luc:Ra09}, \cite{Caval98}, \cite{Cheng84}, \cite%
{Cheng85}, \cite{Dod:Man79}, \cite{Karavayev}, \cite{Kh:Am06}, \cite%
{Kim:Sant:Khan02}, \cite{Kochan07}, \cite{Kochan10}, \cite{LeachAmJPhys78}, 
\cite{Nieto:Truax}, \cite{Oh:Lee:George89}, \cite{Ped:Gue03}, \cite{Safonov}%
, \cite{Svin75}, \cite{Svin76}, \cite{Tikoch78}, \cite{Yeon:Um:George} and
references therein, a detailed bibliography can be found in \cite{Dekker81}, 
\cite{Um:Yeon:George}.

\subsection{A Modified Caldirola-Kanai Hamiltonian}

In this paper, we would like to consider another version of the quantum
damped oscillator with variable Hamiltonian of the form%
\begin{equation}
H=\frac{\omega _{0}}{2}\left( e^{-2\lambda t}\ p^{2}+e^{2\lambda t}\
x^{2}\right) -\lambda \left( px+xp\right) .  \label{modCKham}
\end{equation}%
The Green functions in (\ref{sm1}) has%
\begin{eqnarray}
\alpha \left( t\right) &=&\frac{\omega \cos \omega t+\lambda \sin \omega t}{%
2\omega _{0}\sin \omega t}e^{2\lambda t},  \label{sm5} \\
\beta \left( t\right) &=&-\frac{\omega }{\omega _{0}\sin \omega t}e^{\lambda
t},  \label{sm6} \\
\gamma \left( t\right) &=&\frac{\omega \cos \omega t-\lambda \sin \omega t}{%
2\omega _{0}\sin \omega t}.  \label{sm7}
\end{eqnarray}%
This can be derived directly from equations (\ref{in2})--(\ref{in8})
following Refs.~\cite{Cor-Sot:Lop:Sua:Sus} and \cite{Cor-Sot:Sua:Sus}%
.\medskip

The Ehrenfest theorem for both Caldirola-Kanai models has the same form%
\begin{equation}
\frac{d^{2}}{dt^{2}}\left\langle x\right\rangle +2\lambda \frac{d}{dt}%
\left\langle x\right\rangle +\omega _{0}^{2}\left\langle x\right\rangle =0,
\label{cldamposc}
\end{equation}%
which coincides with the classical equation of motion for a damped
oscillator \cite{BatemanPDE}, \cite{Lan:Lif}. Details are provided in
Appendix~A.

\subsection{The United Model}

The following non-self-adjoint Hamiltonian:%
\begin{equation}
H=\frac{\omega _{0}}{2}\left( e^{-2\lambda t}\ p^{2}+e^{2\lambda t}\
x^{2}\right) -\mu xp  \label{UMHam}
\end{equation}%
coincides with the original Caldirola-Kanai model when $\mu =0$ and the
Hamiltonian is self-adjoint. Another special case $\lambda =0$ corresponds
to the quantum damped oscillator discussed in \cite{Cor-Sot:Sua:Sus} as an
example of a simple quantum system with the non-self-adjoint Hamiltonian.
(This is an alternative way to introduce dissipation of energy to the
quantum harmonic oscillator.) Combining both cases we refer to (\ref{UMHam})
as the united Hamiltonian.\medskip

The Green function is given by%
\begin{equation}
G\left( x,y,t\right) =\sqrt{\frac{\omega e^{\left( \lambda -\mu \right) t}}{%
2\pi i\omega _{0}\sin \omega t}}\ e^{i\left( \alpha \left( t\right)
x^{2}+\beta \left( t\right) xy+\gamma \left( t\right) y^{2}\right) },
\label{UMGreen}
\end{equation}%
where%
\begin{eqnarray}
\alpha \left( t\right) &=&\frac{\omega \cos \omega t+\left( \mu -\lambda
\right) \sin \omega t}{2\omega _{0}\sin \omega t}e^{2\lambda t},
\label{UMGreenA} \\
\beta \left( t\right) &=&-\frac{\omega }{\omega _{0}\sin \omega t}e^{\lambda
t},  \label{UMGreenB} \\
\gamma \left( t\right) &=&\frac{\omega \cos \omega t+\left( \lambda -\mu
\right) \sin \omega t}{2\omega _{0}\sin \omega t}  \label{UMGreenC}
\end{eqnarray}%
with $\omega =\sqrt{\omega _{0}^{2}-\left( \lambda -\mu \right) ^{2}}>0.$%
\medskip

In this case the Ehrenfest theorem takes the form:%
\begin{equation}
\frac{d^{2}}{dt^{2}}\left\langle x\right\rangle +\ 2\left( \lambda +\mu
\right) \frac{d}{dt}\left\langle x\right\rangle +\left( \omega
_{0}^{2}+4\lambda \mu \right) \left\langle x\right\rangle =0.
\label{UMEhren}
\end{equation}%
It is derived in Appendix~A and the Heisenberg uncertainty relation is
discussed in Appendix~B.

\subsection{A Modified Oscillator}

The one-dimensional Hamiltonian of a modified oscillator introduced by
Meiler, Cordero-Soto and Suslov \cite{Me:Co:Su}, \cite{Cor-Sot:Sus} has the
form%
\begin{eqnarray}
H &=&\left( \cos t\ p+\sin t\ x\right) ^{2}  \label{mod1} \\
&=&\cos ^{2}t\ p^{2}+\sin ^{2}t\ x^{2}+\sin t\cos t\ \left( px+xp\right) 
\notag \\
&=&\frac{1}{2}\left( p^{2}+x^{2}\right) +\frac{1}{2}\cos 2t\ \left(
p^{2}-x^{2}\right) +\frac{1}{2}\sin 2t\ \left( px+xp\right) .  \notag
\end{eqnarray}%
(A physical interpretation of this Hamiltonian from the viewpoint of quantum
dynamical invariants will be discussed in Section~4.) The Green function is
given in terms of trigonometric and hyperbolic functions as follows%
\begin{eqnarray}
G\left( x,y,t\right) &=&\frac{1}{\sqrt{2\pi i\left( \cos t\sinh t+\sin
t\cosh t\right) }}  \label{modGreen} \\
&&\times \exp \left( \frac{\left( x^{2}-y^{2}\right) \sin t\sinh
t+2xy-\left( x^{2}+y^{2}\right) \cos t\cosh t}{2i\left( \cos t\sinh t+\sin
t\cosh t\right) }\right) .  \notag
\end{eqnarray}%
More details can be found in\ \cite{Me:Co:Su}, \cite{Cor-Sot:Sus}. The
corresponding Ehrenfest theorem, namely,%
\begin{equation}
\frac{d^{2}}{dt^{2}}\left\langle x\right\rangle +\ 2\tan t\frac{d}{dt}%
\left\langle x\right\rangle -2\left\langle x\right\rangle =0,
\label{modEhrenfest}
\end{equation}%
is derived in Appendix~A.

\subsection{The Modified Damped Oscillator}

The time-dependent Schr\"{o}dinger equation%
\begin{equation}
i\hslash \frac{\partial \psi }{\partial t}=H\left( t\right) \psi
\label{SCHEQ}
\end{equation}%
with the variable quadratic Hamiltonian of the form%
\begin{equation}
H=\frac{p^{2}}{2m\cosh ^{2}\left( \lambda t\right) }+\frac{m\omega _{0}^{2}}{%
2}\cosh ^{2}\left( \lambda t\right) \ x^{2},\quad p=\frac{\hslash }{i}\frac{%
\partial }{\partial x}  \label{CJHam}
\end{equation}%
has been recently considered by Chru\'{s}ci\'{n}ski and Jurkowski \cite%
{Chru:Jurk} as a model of the quantum damped oscillator; see also \cite%
{Most98}.\medskip

In this case the characteristic equation (\ref{in6}) takes the form%
\begin{equation}
\mu ^{\prime \prime }+2\lambda \tanh \left( \lambda t\right) \mu ^{\prime
}+\omega _{0}^{2}\mu =0.  \label{CJcheq}
\end{equation}%
The particular solution is given by%
\begin{equation}
\mu \left( t\right) =\frac{\hslash }{m\omega }\frac{\sin \left( \omega
t\right) }{\cosh \left( \lambda t\right) },\qquad \omega =\sqrt{\omega
_{0}^{2}-\lambda ^{2}}>0  \label{CJmu}
\end{equation}%
and the corresponding propagator can be presented as follows%
\begin{equation}
G\left( x,y,t\right) =\sqrt{\frac{m\omega \cosh \left( \lambda t\right) }{%
2\pi i\hslash \sin \left( \omega t\right) }}\ e^{i\left( \alpha \left(
t\right) x^{2}+\beta \left( t\right) xy+\gamma \left( t\right) y^{2}\right)
},  \label{CJGreen}
\end{equation}%
where%
\begin{equation}
\alpha \left( t\right) =\frac{m\cosh \left( \lambda t\right) }{2\hslash \sin
\left( \omega t\right) }\left( \omega \cos \left( \omega t\right) \cosh
\left( \lambda t\right) -\lambda \sin \left( \omega t\right) \sinh \left(
\lambda t\right) \right) ,  \label{CJa}
\end{equation}%
\begin{equation}
\beta \left( t\right) =-\frac{m\omega \cosh \left( \lambda t\right) }{%
2\hslash \sin \left( \omega t\right) },  \label{CJb}
\end{equation}%
\begin{equation}
\gamma \left( t\right) =\frac{m\omega \cos \left( \omega t\right) }{2\hslash
\sin \left( \omega t\right) }.  \label{CJc}
\end{equation}%
(We somewhat simplify the original propagator found in \cite{Chru:Jurk}; see
also \cite{Kochan}.) This Green function can be independently derived from
our equations (\ref{in3})--(\ref{in5}) with the help of the following
elementary antiderivative:%
\begin{eqnarray}
&&\left( \frac{\lambda \cos \left( \omega t+\delta \right) \sinh \left(
\lambda t\right) +\omega \sin \left( \omega t+\delta \right) \cosh \left(
\lambda t\right) }{\omega \cos \left( \omega t+\delta \right) \cosh \left(
\lambda t\right) -\lambda \sin \left( \omega t+\delta \right) \sinh \left(
\lambda t\right) }\right) ^{\prime }  \label{CJanti} \\
&&\quad =\frac{\omega \omega _{0}^{2}\cosh ^{2}\left( \lambda t\right) }{%
\left( \omega \cos \left( \omega t+\delta \right) \cosh \left( \lambda
t\right) -\lambda \sin \left( \omega t+\delta \right) \sinh \left( \lambda
t\right) \right) ^{2}}.  \notag
\end{eqnarray}%
Further details are left to the reader.\medskip

Special cases are as follows: when $\lambda =0,$ one recovers the standard
propagator for the linear harmonic oscillator \cite{Fey:Hib}, and $\omega
_{0}=0$ gives a pure damping case \cite{Kochan}:%
\begin{equation}
G\left( x,y,t\right) =\sqrt{\frac{m\lambda }{2\pi i\hslash \tanh \left(
\omega t\right) }}\exp \left( \frac{im\lambda \left( x-y\right) ^{2}}{%
2\hslash \tanh \left( \omega t\right) }\right) .  \label{CJpure}
\end{equation}%
In the limit $\lambda \rightarrow 0$ formula (\ref{CJpure}) reproduces the
propagator for a free particle \cite{Fey:Hib}.\medskip

The Ehrenfest theorem for the quantum damped oscillator of Chru\'{s}ci\'{n}%
ski and Jurkowski coincides with our characteristic equation (\ref{CJcheq});
see Appendix~A for more details.\medskip

It is worth adding that in the momentum representation, when $%
p\leftrightarrow x,$ a rescaled Hamiltonian (\ref{CJHam}) ($\hslash =m\omega
_{0}=1$) takes the form%
\begin{equation}
H=\frac{\omega _{0}}{2}\left( \cosh ^{2}\left( \lambda t\right) \ p^{2}+%
\frac{x^{2}}{\cosh ^{2}\left( \lambda t\right) }\right) .  \label{CJhamP}
\end{equation}%
The corresponding characteristic equation%
\begin{equation}
\mu ^{\prime \prime }-2\lambda \tanh \left( \lambda t\right) \mu ^{\prime
}+\omega _{0}^{2}\mu =0  \label{CJcharP}
\end{equation}%
has a required elementary solution%
\begin{equation}
\mu =\frac{1}{\omega _{0}}\left( \lambda \cos \left( \omega t\right) \sinh
\left( \lambda t\right) +\omega \sin \left( \omega t\right) \cosh \left(
\lambda t\right) \right)  \label{CJmuP}
\end{equation}%
with $\mu ^{\prime }\left( 0\right) =2a\left( 0\right) =\omega _{0}$ and%
\begin{equation}
\mu \rightarrow \frac{1}{2\omega _{0}}e^{\lambda t}\left( \lambda \cos
\left( \omega t\right) +\omega \sin \left( \omega t\right) \right)
\label{CJmuasy}
\end{equation}%
as $t\rightarrow \infty .$ The Green function is given by formula (\ref{in2}%
) with the following coefficients:%
\begin{eqnarray}
\alpha \left( t\right) &=&\frac{\omega _{0}\cos \left( \omega t\right) }{%
2\cosh \left( \lambda t\right) \left( \lambda \cos \left( \omega t\right)
\sinh \left( \lambda t\right) +\omega \sin \left( \omega t\right) \cosh
\left( \lambda t\right) \right) },  \label{CJalphaP} \\
\beta \left( t\right) &=&-\frac{\omega _{0}}{\lambda \cos \left( \omega
t\right) \sinh \left( \lambda t\right) +\omega \sin \left( \omega t\right)
\cosh \left( \lambda t\right) },  \label{CJbetaP} \\
\gamma \left( t\right) &=&\frac{\omega _{0}\left( \omega \cos \left( \omega
t\right) \cosh \left( \lambda t\right) -\lambda \sin \left( \omega t\right)
\sinh \left( \lambda t\right) \right) }{2\omega \left( \lambda \cos \left(
\omega t\right) \sinh \left( \lambda t\right) +\omega \sin \left( \omega
t\right) \cosh \left( \lambda t\right) \right) }.  \label{CJgammaP}
\end{eqnarray}%
The details are left to the reader.

\subsection{A Modified Parametric Oscillator}

In a similar fashion we consider the following Hamiltonian:%
\begin{eqnarray}
&&H=\frac{\omega }{2}\left( \tanh ^{2}\left( \lambda t+\delta \right) \
p^{2}+\coth ^{2}\left( \lambda t+\delta \right) \ x^{2}\right)
\label{MPOHam} \\
&&\qquad +\frac{\lambda }{\sinh \left( 2\lambda t+2\delta \right) }\left(
px+xp\right) \qquad \left( \delta \neq 0\right) ,  \notag
\end{eqnarray}%
which seems to be missing in the available literature. The corresponding
characteristic equation:%
\begin{equation}
\mu ^{\prime \prime }-\frac{4\lambda }{\sinh \left( 2\lambda t+2\delta
\right) }\mu ^{\prime }+\left( \omega ^{2}+\frac{2\lambda ^{2}}{\sinh
^{2}\left( \lambda t+\delta \right) }\right) \mu =0  \label{MPOchar}
\end{equation}%
has an elementary solution of the form:%
\begin{equation}
\mu =\sin \left( \omega t\right) \frac{\tanh \left( \lambda t+\delta \right) 
}{\coth \delta }.  \label{MPOmu}
\end{equation}%
In the limit $t\rightarrow \infty ,$ $\mu \rightarrow \sin \left( \omega
t\right) \tanh \delta .$\medskip

The Green function can be found as follows%
\begin{equation}
G\left( x,y,t\right) =\sqrt{\frac{\coth \delta }{2\pi i\sin \left( \omega
t\right) \tanh \left( \lambda t+\delta \right) }}\ e^{i\left( \alpha \left(
t\right) x^{2}+\beta \left( t\right) xy+\gamma \left( t\right) y^{2}\right)
},  \label{MPOGreen}
\end{equation}%
where%
\begin{equation}
\alpha \left( t\right) =\frac{1}{2}\cot \left( \omega t\right) \coth
^{2}\left( \lambda t+\delta \right) ,  \label{MPOalpha}
\end{equation}%
\begin{equation}
\beta \left( t\right) =-\frac{\coth \delta }{\sin \left( \omega t\right) }%
\coth \left( \lambda t+\delta \right) ,  \label{MPObeta}
\end{equation}%
\begin{equation}
\gamma \left( t\right) =\frac{1}{2}\cot \left( \omega t\right) \coth
^{2}\delta .  \label{MPOgamma}
\end{equation}%
The Ehrenfest theorem coincides with the characteristic equation (\ref%
{MPOchar}). One should interchange $a\leftrightarrow b$ and $d\rightarrow -d$
in the momentum representation \cite{Cor-Sot:Sus}. The corresponding
solutions can be found with the help of the substitution $\delta \rightarrow
\delta +i\pi /2.$ The trigonometric cases, when $\lambda \rightarrow
i\lambda ,$ $\delta \rightarrow i\delta $ and $\omega \rightarrow -\omega ,$
are left to the reader.

\subsection{Parametric Oscillators}

In conclusion a somewhat related quantum parametric oscillator:%
\begin{equation}
H=\frac{1}{2}\left( p^{2}+\left( \omega ^{2}+\frac{2\lambda ^{2}}{\cosh
^{2}\left( \lambda t\right) }\right) x^{2}\right) ,  \label{ParamHam}
\end{equation}%
when%
\begin{equation}
\mu ^{\prime \prime }+\left( \omega ^{2}+\frac{2\lambda ^{2}}{\cosh
^{2}\left( \lambda t\right) }\right) \mu =0  \label{ParamChar}
\end{equation}%
and%
\begin{equation}
\mu =\frac{\lambda \cos \left( \omega t\right) \sinh \left( \lambda t\right)
+\omega \sin \left( \omega t\right) \cosh \left( \lambda t\right) }{\left(
\omega ^{2}+\lambda ^{2}\right) \cosh \left( \lambda t\right) },
\label{ParamMu}
\end{equation}%
has the Green function (\ref{in2}) with the following coefficients:%
\begin{eqnarray}
\alpha \left( t\right) &=&\frac{\left( \omega ^{2}+\lambda ^{2}\cosh
^{-2}\left( \lambda t\right) \right) \cos \left( \omega t\right) -\lambda
\omega \tanh \left( \lambda t\right) \sin \left( \omega t\right) }{2\left(
\omega \sin \left( \omega t\right) +\lambda \tanh \left( \lambda t\right)
\cos \left( \omega t\right) \right) },  \label{ParamAlpha} \\
\beta \left( t\right) &=&-\frac{\omega ^{2}+\lambda ^{2}}{\omega \sin \left(
\omega t\right) +\lambda \tanh \left( \lambda t\right) \cos \left( \omega
t\right) },  \label{ParamBeta} \\
\gamma \left( t\right) &=&\frac{\left( \omega ^{2}+\lambda ^{2}\right)
\left( \omega \cos \left( \omega t\right) -\lambda \tanh \left( \lambda
t\right) \sin \left( \omega t\right) \right) }{2\omega \left( \omega \sin
\left( \omega t\right) +\lambda \tanh \left( \lambda t\right) \cos \left(
\omega t\right) \right) }.  \label{ParamGamma}
\end{eqnarray}%
The Green function for the parametric oscillator in general:%
\begin{equation}
H=\frac{1}{2}\left( p^{2}+\omega ^{2}\left( t\right) x^{2}\right)
\label{ParametHam}
\end{equation}%
can be found, for example, in Ref.~\cite{Lan:Sus}. (The time-dependent
quantum oscillator was thoroughly examined by Husimi \cite{HusimiI53}, \cite%
{HusimiII53} and later many authors had treated different aspects of the
problem; see \cite{Dodonov:Man'koFIAN87}, \cite{Malkin:Man'ko79}, \cite%
{Malk:Man:Trif70}, \cite{Malk:Man:Trif73}, \cite{Perelomov:Popov69}, \cite%
{Per:Zel}, \cite{Popov:PerelomovI69}, \cite{Popov:PerelomovII69}, \cite%
{Schuch08} and \cite{Solimenoetal69}; a detailed bibliography is given in
Ref.~\cite{Camizetall71}.)

\section{Expectation Values of Quadratic Operators}

We start from a convenient differentiation formula.

\begin{lemma}
Let%
\begin{equation}
H=a\left( t\right) p^{2}+b\left( t\right) x^{2}+d\left( t\right) \left(
px+xp\right) ,  \label{genham}
\end{equation}%
\begin{equation}
O=A\left( t\right) p^{2}+B\left( t\right) x^{2}+C\left( t\right) \left(
px+xp\right)  \label{genop}
\end{equation}%
and%
\begin{equation}
\left\langle O\right\rangle =\left\langle \psi ,O\psi \right\rangle
=\int_{-\infty }^{\infty }\psi ^{\ast }O\psi \ dx,\qquad i\frac{\partial
\psi }{\partial t}=H\psi  \label{expect}
\end{equation}%
(we use the star for complex conjugate). Then%
\begin{eqnarray}
\frac{d}{dt}\left\langle O\right\rangle &=&\left( \frac{dA}{dt}+4\left(
aC-dA\right) \right) \left\langle p^{2}\right\rangle  \label{diffexp} \\
&&+\left( \frac{dB}{dt}+4\left( dB-bC\right) \right) \left\langle
x^{2}\right\rangle  \notag \\
&&+\left( \frac{dC}{dt}+2\left( aB-bA\right) \right) \left\langle
px+xp\right\rangle .  \notag
\end{eqnarray}
\end{lemma}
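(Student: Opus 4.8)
The plan is to derive the result from the quantum-mechanical analogue of the Ehrenfest/Heisenberg equation of motion for a time-dependent observable. Writing $\langle O\rangle =\int_{-\infty}^{\infty}\psi^{\ast}O\psi\,dx$ and differentiating under the integral sign, the product rule produces three contributions: one from $\partial_{t}\psi^{\ast}$, one from the explicit time dependence of the coefficients $A,B,C$, and one from $\partial_{t}\psi$. Since the coefficients $a,b,d$ are real and the cross term enters through the symmetric combination $px+xp$, the operator $H$ in (\ref{genham}) is (formally) self-adjoint, so integration by parts gives $\langle H\psi,O\psi\rangle=\langle\psi,HO\psi\rangle$ with no boundary contributions for the rapidly decaying solutions under consideration. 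Substituting $i\psi_{t}=H\psi$ then collapses the first and third contributions into a single commutator, yielding the master formula
\begin{equation}
\frac{d}{dt}\left\langle O\right\rangle =\left\langle \frac{\partial O}{\partial t}\right\rangle +i\left\langle \left[ H,O\right] \right\rangle .
\label{master}
\end{equation}
The explicit-time-dependence term $\langle\partial_{t}O\rangle$ immediately reproduces the $dA/dt$, $dB/dt$, and $dC/dt$ pieces of the three coefficients in (\ref{diffexp}).

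It remains to evaluate $i\langle[H,O]\rangle$. Because $H$ and $O$ are both quadratic forms in the same three generators $p^{2}$, $x^{2}$, and $px+xp$, the commutator is bilinear and is completely determined by the three structure relations of the underlying $sl(2)$ algebra, which I would compute from the canonical relation $[x,p]=i$:
\begin{equation}
\left[ p^{2},x^{2}\right] =-2i\left( px+xp\right) ,\quad \left[ p^{2},px+xp\right] =-4ip^{2},\quad \left[ x^{2},px+xp\right] =4ix^{2}.
\label{structure}
\end{equation}
Expanding $[aP+bX+dD,\,AP+BX+CD]$ with $P=p^{2}$, $X=x^{2}$, $D=px+xp$ and substituting (\ref{structure}) gives $[H,O]=4i(dA-aC)p^{2}+4i(bC-dB)x^{2}+2i(bA-aB)(px+xp)$. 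Multiplying by $i$ flips each sign, so that $i[H,O]=4(aC-dA)p^{2}+4(dB-bC)x^{2}+2(aB-bA)(px+xp)$, and taking expectation values produces exactly the remaining $4(aC-dA)$, $4(dB-bC)$, and $2(aB-bA)$ terms in (\ref{diffexp}).

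The computation is essentially bookkeeping, so the main obstacle is not conceptual but clerical: one must track the factors of $i$ and the antisymmetry of the commutators carefully, since a single sign error anywhere in (\ref{structure}) propagates into all three coefficients. The only genuinely analytic point to keep in mind is the justification of (\ref{master}) itself --- differentiation under the integral sign and the vanishing of the boundary terms in the integration by parts --- which is legitimate for the Gaussian-type solutions (\ref{in2}) and their superpositions (\ref{CauchyInVProb}) generated by the propagators of Section~2.
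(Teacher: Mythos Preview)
Your proof is correct and follows essentially the same route as the paper: both start from the Ehrenfest-type identity $\frac{d}{dt}\langle O\rangle=\langle\partial_{t}O\rangle+\frac{1}{i}\langle[O,H]\rangle$ (equivalently your $i\langle[H,O]\rangle$) and then reduce the commutator to the $sl(2)$ structure relations among $p^{2}$, $x^{2}$, and $px+xp$. The paper records the same commutator identities (\ref{commuts}) and leaves the expansion to the reader, whereas you carry it out explicitly; otherwise the arguments coincide.
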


\begin{proof}
The time derivative of the expectation value can be written as \cite{La:Lif}%
, \cite{Merz}, \cite{Schiff}:%
\begin{equation}
\frac{d}{dt}\left\langle O\right\rangle =\left\langle \frac{\partial O}{%
\partial t}\right\rangle +\frac{1}{i}\left\langle \left[ O,H\right]
\right\rangle ,  \label{diffops}
\end{equation}%
where $\left[ O,H\right] =OH-HO$ (we freely interchange differentiation and
integration throughout the paper, it can be justified for certain classes of
solutions \cite{Lieb:Loss}, \cite{Oh89}, \cite{Per-G:Tor:Mont}, \cite{Velo96}%
). One should make use of the standard commutator properties, including
familiar identities%
\begin{eqnarray}
&&\left[ x^{2},p^{2}\right] =2i\left( px+xp\right) ,\qquad \left[ x,p^{2}%
\right] =2ip,\qquad \left[ x^{2},p\right] =2ix,  \label{commuts} \\
&&\left[ px+xp,p^{2}\right] =4ip^{2},\qquad \quad \left[ x^{2},px+xp\right]
=4ix^{2},  \notag
\end{eqnarray}%
in order to complete the proof.
\end{proof}

Quantum systems with the self-adjoint Hamiltonians (\ref{genham}) are called
the generalized harmonic oscillators \cite{Berry85}, \cite{Dod:Mal:Man75}, 
\cite{Hannay85}, \cite{Leach90}, \cite{Wolf81}, \cite%
{Yeon:Lee:Um:George:Pandey93}. At the same time one has to deal with
non-self-adjoint Hamiltonians in the theory of dissipative quantum systems
(see, for example, \cite{Cor-Sot:Sua:Sus}, \cite{Dekker81}, \cite{Kochan10}, 
\cite{Tarasov01}, \cite{Um:Yeon:George} and references therein) or when
using separation of variables in an accelerating frame of reference for a
charged particle moving in an uniform time-dependent magnetic field \cite%
{Cor-Sot:Lop:Sua:Sus}. An extension to the case of non-self-adjoint
Hamiltonians is as follows.

\begin{lemma}
If%
\begin{equation}
H=a\left( t\right) p^{2}+b\left( t\right) x^{2}+c\left( t\right) px+d\left(
t\right) xp,  \label{GenHam}
\end{equation}%
\begin{equation}
O=A\left( t\right) p^{2}+B\left( t\right) x^{2}+C\left( t\right) px+D\left(
t\right) xp,  \label{GenOp}
\end{equation}%
then%
\begin{eqnarray}
\frac{d}{dt}\left\langle O\right\rangle &=&\left( \frac{dA}{dt}+2a\left(
C+D\right) -\left( 3c+d\right) A\right) \left\langle p^{2}\right\rangle
\label{GenSys} \\
&&+\left( \frac{dB}{dt}-2b\left( C+D\right) +\left( c+3d\right) B\right)
\left\langle x^{2}\right\rangle  \notag \\
&&+\left( \frac{dC}{dt}+2\left( aB-bA\right) -\left( c-d\right) C\right)
\left\langle px\right\rangle  \notag \\
&&+\left( \frac{dD}{dt}+2\left( aB-bA\right) -\left( c-d\right) D\right)
\left\langle xp\right\rangle .  \notag
\end{eqnarray}
\end{lemma}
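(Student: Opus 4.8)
The plan is to extend the self-adjoint differentiation rule of Lemma~1 to the non-self-adjoint Hamiltonian \eqref{GenHam}, carefully tracking the correction produced by $H^{\dagger}\neq H$. First I would record the appropriate version of the Ehrenfest identity. Differentiating $\langle O\rangle=\int_{-\infty}^{\infty}\psi^{\ast}O\psi\,dx$ and using $i\,\partial_t\psi=H\psi$ together with its complex conjugate (equivalently, moving $H$ across the inner product by integration by parts), one obtains
\[
\frac{d}{dt}\langle O\rangle=\left\langle\frac{\partial O}{\partial t}\right\rangle+i\left\langle H^{\dagger}O-OH\right\rangle ,
\]
where $H^{\dagger}$ is the formal adjoint. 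This replaces the formula \eqref{diffops} used in the self-adjoint case; when $H=H^{\dagger}$ it collapses to $\frac{1}{i}\langle[O,H]\rangle$, so consistency with Lemma~1 is immediate.

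The key simplification is that the adjoint differs from $H$ only by a $c$-number. Since $x$ and $p$ are self-adjoint and $(px)^{\dagger}=xp$, $(xp)^{\dagger}=px$, with $a,b,c,d$ real one has $H^{\dagger}-H=(c-d)(xp-px)=i(c-d)$ using $[x,p]=i$. Hence $H^{\dagger}O-OH=[H,O]+i(c-d)O$, so that
\[
\frac{d}{dt}\langle O\rangle=\left\langle\frac{\partial O}{\partial t}\right\rangle+\frac{1}{i}\langle[O,H]\rangle-(c-d)\langle O\rangle .
\]
The first term contributes $\dot A\langle p^2\rangle+\dot B\langle x^2\rangle+\dot C\langle px\rangle+\dot D\langle xp\rangle$, and the last term contributes $-(c-d)\bigl(A\langle p^2\rangle+B\langle x^2\rangle+C\langle px\rangle+D\langle xp\rangle\bigr)$; in particular it already supplies the $-(c-d)C$ and $-(c-d)D$ appearing on the $\langle px\rangle$ and $\langle xp\rangle$ lines of \eqref{GenSys}.

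To evaluate the middle term I would reduce it to Lemma~1. Writing $cpx+dxp=\frac{c+d}{2}(px+xp)-\frac{i}{2}(c-d)$ and $Cpx+Dxp=\frac{C+D}{2}(px+xp)-\frac{i}{2}(C-D)$, the scalar pieces commute and drop out, so $[O,H]=[O_{\mathrm{op}},H_{\mathrm{op}}]$ with $O_{\mathrm{op}}=Ap^2+Bx^2+\frac{C+D}{2}(px+xp)$ and $H_{\mathrm{op}}=ap^2+bx^2+\frac{c+d}{2}(px+xp)$. These have exactly the self-adjoint shape treated in Lemma~1, so applying that lemma with the substitutions $d\mapsto\frac{c+d}{2}$ and $C\mapsto\frac{C+D}{2}$ (and retaining only the non-derivative, commutator part) gives
\[
\tfrac{1}{i}\langle[O,H]\rangle=\bigl(2a(C+D)-2(c+d)A\bigr)\langle p^2\rangle+\bigl(2(c+d)B-2b(C+D)\bigr)\langle x^2\rangle+2(aB-bA)\bigl(\langle px\rangle+\langle xp\rangle\bigr).
\]

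Finally I would collect the three contributions and match coefficients. On the $\langle p^2\rangle$ line, $-2(c+d)A-(c-d)A=-(3c+d)A$; on the $\langle x^2\rangle$ line, $2(c+d)B-(c-d)B=(c+3d)B$; and the $\langle px\rangle,\langle xp\rangle$ lines assemble into $\dot C+2(aB-bA)-(c-d)C$ and $\dot D+2(aB-bA)-(c-d)D$, which is precisely \eqref{GenSys}. The only genuinely delicate point—and the place where a naive application of the self-adjoint Ehrenfest rule fails—is the non-self-adjoint correction $-(c-d)\langle O\rangle$: without it one would obtain $-2(c+d)A$ and $+2(c+d)B$ rather than the asymmetric $-(3c+d)A$ and $(c+3d)B$, and the $(c-d)$ terms on the $\langle px\rangle,\langle xp\rangle$ lines would be absent. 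Everything else is the commutator bookkeeping already carried out in Lemma~1.
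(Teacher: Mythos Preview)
Your proof is correct and starts from the same non-self-adjoint Ehrenfest identity the paper invokes, namely
\[
\frac{d}{dt}\langle O\rangle=\left\langle\frac{\partial O}{\partial t}\right\rangle+\frac{1}{i}\langle OH-H^{\dagger}O\rangle,
\]
after which the paper simply says ``standard commutator evaluations complete the proof.'' Your route is a mild but pleasant refinement of that sketch: rather than grinding through all commutators from scratch, you observe that $H^{\dagger}-H=i(c-d)$ is a scalar, peel off the resulting correction $-(c-d)\langle O\rangle$, and then recognize the remaining commutator $[O,H]$ as exactly the self-adjoint case of Lemma~1 with the substitutions $d\mapsto\tfrac{c+d}{2}$, $C\mapsto\tfrac{C+D}{2}$. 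This buys you a cleaner bookkeeping and makes transparent precisely which terms in \eqref{GenSys} come from non-self-adjointness (the asymmetry $-(3c+d)A$ versus $(c+3d)B$ and the $-(c-d)C$, $-(c-d)D$ pieces), whereas the paper's direct computation would produce the same answer without isolating that structure.
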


\begin{proof}
One should use%
\begin{equation}
\frac{d}{dt}\left\langle O\right\rangle =\left\langle \frac{\partial O}{%
\partial t}\right\rangle +\frac{1}{i}\left\langle OH-H^{\dagger
}O\right\rangle ,  \label{DiffOper}
\end{equation}%
where $H^{\dagger }$ is the Hermitian adjoint of the Hamiltonian operator $%
H. $ Our formula is a simple extension of the well-known expression \cite%
{La:Lif}, \cite{Merz}, \cite{Schiff} to the case of a non-self-adjoint
Hamiltonian \cite{Cor-Sot:Sua:Sus}. Standard commutator evaluations complete
the proof.\medskip
\end{proof}

Polynomial operators of the higher orders in $x$ and $p$ can be
differentiated in a similar fashion. An analog of the product rule is given
in \cite{Suslov10}. The details are left to the reader.

\section{Energy Operators and Quadratic Invariants}

In the case of the time-independent Hamiltonian, one gets%
\begin{equation}
\frac{d}{dt}\left\langle H\right\rangle =0  \label{constant}
\end{equation}%
by (\ref{diffops}). The law of conservation of energy states that%
\begin{equation}
E=\left\langle H\right\rangle =constant.  \label{constantenergy}
\end{equation}%
In general one has to construct quantum integrals of motion, or dynamical
invariants, that are different from the variable Hamiltonian (see, for
example, \cite{Lewis:Riesen69}, \cite{Wolf81}, \cite%
{Yeon:Lee:Um:George:Pandey93}; linear case is dealt with in \cite{Dod:Man79}%
, \cite{Dodonov:Man'koFIAN87}, \cite{Malkin:Man'ko79}, \cite{Malk:Man:Trif73}
and Appendix~C).

\subsection{Energy Operators}

A familiar definition is in order (see, for example, \cite%
{Dodonov:Man'koFIAN87}, \cite{Malkin:Man'ko79}).

\begin{definition}
We call the quadratic operator \textrm{(\ref{genop})} an energy operator $E,$
or a quadratic (dynamical) invariant, if%
\begin{equation}
\frac{d}{dt}\left\langle E\right\rangle =0  \label{defenergy}
\end{equation}%
for the corresponding variable Hamiltonian \textrm{(\ref{genham})}.
\end{definition}

By Lemma~1 the coefficients of an energy operator,%
\begin{equation}
E=A\left( t\right) p^{2}+B\left( t\right) x^{2}+C\left( t\right) \left(
px+xp\right) ,  \label{energyop}
\end{equation}%
must satisfy the system of ordinary differential equations:%
\begin{eqnarray}
\frac{dA}{dt}+4\left( a\left( t\right) C-d\left( t\right) A\right) &=&0,
\label{energysysA} \\
\frac{dB}{dt}+4\left( d\left( t\right) B-b\left( t\right) C\right) &=&0,
\label{energysysB} \\
\frac{dC}{dt}+2\left( a\left( t\right) B-b\left( t\right) A\right) &=&0.
\label{energysysC}
\end{eqnarray}%
In general a unique solution of this system with respect to arbitrary
initial conditions $A_{0}=A\left( 0\right) ,$ $B_{0}=B\left( 0\right) ,$ $%
C_{0}=C\left( 0\right) $ \cite{HilleODE} determines a three-parameter family
of the quadratic invariants (\ref{energyop}). Special cases, when solutions
can be found explicitly, are of the most practical importance.\medskip

In this section we find the simplest energy operators for all quadratic
models under consideration as follows:%
\begin{equation}
E=\frac{\omega _{0}}{2}\left( e^{-2\lambda t}\ p^{2}+e^{2\lambda t}\
x^{2}\right) +\frac{\lambda }{2}\left( px+xp\right) ,  \label{EO1}
\end{equation}%
\begin{equation}
E=\frac{\omega _{0}}{2}\left( e^{-2\lambda t}\ p^{2}+e^{2\lambda t}\
x^{2}\right) -\frac{\lambda }{2}\left( px+xp\right) ,  \label{EO2}
\end{equation}%
\begin{equation}
E=\frac{1}{2}\cos 2t\ \left( p^{2}-x^{2}\right) +\frac{1}{2}\sin 2t\ \left(
px+px\right) ,  \label{EO3}
\end{equation}%
\begin{equation}
E=\tanh ^{2}\left( \lambda t+\delta \right) \ p^{2}+\coth ^{2}\left( \lambda
t+\delta \right) \ x^{2}  \label{EQ4}
\end{equation}%
for the Caldirola-Kanai Hamiltonian (\ref{CKham}) \cite{Svin75}, the
modified Caldirola-Kanai Hamiltonian (\ref{modCKham}), the modified
oscillator of Meiler, Cordero-Soto and Suslov (\ref{mod1}) and for the
modified parametric oscillator (\ref{MPOHam}), respectively. Their
coefficients solve the corresponding systems (\ref{energysysA})--(\ref%
{energysysC}) for special initial data.\medskip

An energy operator for the united model (\ref{UMHam}) is given by%
\begin{equation}
E=\frac{\omega _{0}}{2}e^{\mu t}\left( e^{-2\lambda t}\ p^{2}+e^{2\lambda
t}\ x^{2}\right) +\frac{1}{2}\left( \lambda -\mu \right) e^{\mu t}\left(
px+xp\right) .  \label{EOUM}
\end{equation}%
One should use Lemma~2; verification is left to the reader. Finally an
energy operator for the quantum damped oscillator of Chru\'{s}ci\'{n}ski and
Jurkowski with a rescaled Hamiltonian (\ref{CJham}) is given by expression (%
\ref{CJEnergy}). A general case of the variable quadratic Hamiltonian is
discussed in Theorem~1.

\subsection{ The Lewis--Riesenfeld Invariant}

Classical Hamiltonian of the generalized harmonic oscillator can be
transformed into the Hamiltonian of a parametric oscillator \cite{Berry85}, 
\cite{Hannay85}, \cite{PadillaMaster}, \cite{Yeon:Lee:Um:George:Pandey93}.
All quadratic invariants of the quantum parametric oscillator (\ref%
{ParametHam}) can be found as follows \cite{Lewis67}, \cite{Lewis68}, \cite%
{Lewis68a}, \cite{Lewis:Riesen69}. The corresponding system,%
\begin{eqnarray}
&&A^{\prime }+2C=0,  \label{ParamOscA} \\
&&B^{\prime }-2\omega ^{2}\left( t\right) C=0,  \label{ParamOscB} \\
&&C^{\prime }+B-\omega ^{2}\left( t\right) A=0,  \label{ParamOscC}
\end{eqnarray}%
is integrated by the substitution $A=\kappa ^{2}.$ Then $C=-\kappa \kappa
^{\prime },$ $B=\kappa \kappa ^{\prime \prime }+\left( \kappa ^{\prime
}\right) ^{2}+\omega ^{2}\left( t\right) \kappa ^{2}$ and equation (\ref%
{ParamOscB}) becomes%
\begin{eqnarray*}
\left( \kappa \kappa ^{\prime \prime }+\left( \kappa ^{\prime }\right)
^{2}+\omega ^{2}\left( t\right) \kappa ^{2}\right) ^{\prime }+2\omega
^{2}\left( t\right) \kappa \kappa ^{\prime } &=&0, \\
\kappa \left( \kappa ^{\prime \prime }+\omega ^{2}\left( t\right) \kappa
\right) ^{\prime }+3\kappa ^{\prime }\left( \kappa ^{\prime \prime }+\omega
^{2}\left( t\right) \kappa \right) &=&0
\end{eqnarray*}%
or with an integrating factor:%
\begin{equation}
\frac{d}{dt}\left( \kappa ^{3}\left( \kappa ^{\prime \prime }+\omega
^{2}\left( t\right) \kappa \right) \right) =0  \label{ODEint}
\end{equation}%
(see \cite{Lewis:Riesen69} and \cite{Leach:Andriopo08}). Thus%
\begin{equation}
\kappa ^{\prime \prime }+\omega ^{2}\left( t\right) \kappa =\frac{c_{0}}{%
\kappa ^{3}}\qquad \left( c_{0}=0,1\right)  \label{NonlinearODE}
\end{equation}%
and a general solution of the system (\ref{ParamOscA})--(\ref{ParamOscC}) is
given by%
\begin{equation}
A=\kappa ^{2},\quad B=\left( \kappa ^{\prime }\right) ^{2}+\frac{c_{0}}{%
\kappa ^{2}},\quad C=-\kappa \kappa ^{\prime }  \label{ParamSysSol}
\end{equation}%
in terms of solutions of the nonlinear equation (\ref{NonlinearODE}), which
is called Ermakov's equation, when $c_{0}=1$ \cite{Ermakov} (see also, \cite%
{Leach:Andrio08}, \cite{Lewis68a}, \cite{Pinney50} and \cite{Schuch08}).
Thus the quadratic integrals of motion can be presented in the form \cite%
{Lewis:Riesen69}:%
\begin{equation}
E=\left( \kappa p-\kappa ^{\prime }x\right) ^{2}+\frac{c_{0}}{\kappa ^{2}}%
x^{2}  \label{ParamQuadInv}
\end{equation}%
for any given solution of the Ermakov equation (\ref{NonlinearODE}). This
quantum invariant is an analog of the Ermakov--Lewis integral of motion for
the classical parametric oscillator \cite{Ermakov}, \cite{Lewis67}, \cite%
{Lewis68}, \cite{Lewis68a}, \cite{Symon70}.\medskip

In general if two linearly independent solutions of the classical parametric
oscillator equation are available:%
\begin{equation}
u^{\prime \prime }+\omega ^{2}\left( t\right) u=0,\qquad v^{\prime \prime
}+\omega ^{2}\left( t\right) v=0,  \label{ParamOscillatorClSol}
\end{equation}%
then solutions of the nonlinear Ermakov equation:%
\begin{equation}
\kappa ^{\prime \prime }+\omega ^{2}\left( t\right) \kappa =\frac{1}{\kappa
^{3}}  \label{ErmakovEquationCl}
\end{equation}%
are given by%
\begin{equation}
\kappa =\left( Au^{2}+2Buv+Cv^{2}\right) ^{1/2}
\label{ErmakovEqPinneySolution}
\end{equation}%
(so-called Pinney's solution \cite{Pinney50}, \cite{Eliezer:Gray76}, \cite%
{Leach:Andrio08}, \cite{Lewis68a}, \cite{PadillaMaster}), where the
constants $A,$ $B$ and $C$ are related according to $AC-B^{2}=1/W^{2}$ with $%
W$ being the constant Wronskian of the two linearly independent
solutions.\medskip

For example, in the case of the simple harmonic oscillator with $\omega
\left( t\right) =1,$ there are two elementary solutions:%
\begin{equation}
\kappa =1\quad \left( c_{0}=1\right) ,\qquad \kappa =\cos t\quad \left(
c_{0}=0\right)  \label{ElemSol}
\end{equation}%
and the energy operators are given by%
\begin{eqnarray}
H &=&\frac{1}{2}\left( p^{2}+x^{2}\right) ,  \label{HarmHam} \\
E &=&\left( \cos t\ p+\sin t\ x\right) ^{2}.  \label{MC-SShamenergy}
\end{eqnarray}%
It provides a somewhat better understanding of the nature of the Hamiltonian
discussed by Meiler, Cordero-Soto and Suslov \cite{Me:Co:Su} --- this
operator plays a role of the simplest time-dependent quadratic integral of
motion for the linear harmonic oscillator.\medskip

In a similar fashion the dynamical invariants of the parametric oscillator (%
\ref{ParamHam}) are given by the expression (\ref{ParamQuadInv}) with $%
c_{0}\neq 0.$ In the Pinney solution (\ref{ErmakovEqPinneySolution}) one can
choose%
\begin{eqnarray}
u &=&\frac{\omega \cos \left( \omega t\right) \cosh \left( \lambda t\right)
-\lambda \sin \left( \omega t\right) \sinh \left( \lambda t\right) }{\cosh
\left( \lambda t\right) },  \label{ParamClSolU} \\
v &=&\frac{\omega \sin \left( \omega t\right) \cosh \left( \lambda t\right)
+\lambda \cos \left( \omega t\right) \sinh \left( \lambda t\right) }{\cosh
\left( \lambda t\right) }  \label{ParamClSolV}
\end{eqnarray}%
as two linearly independent solutions of the classical equation of motion (%
\ref{ParamChar}) with $W\left( u,v\right) =\omega \left( \omega ^{2}+\lambda
^{2}\right) .$ If $A=C$ and $B=0,$ then%
\begin{equation}
\kappa =\left( \omega ^{2}+\lambda ^{2}\tanh ^{2}\left( \lambda t\right)
\right) ^{1/2}  \label{ParamEnergySol}
\end{equation}%
is a particular solution of the corresponding Ermakov equation:%
\begin{equation}
\kappa ^{\prime \prime }+\left( \omega ^{2}+\frac{2\lambda ^{2}}{\cosh
^{2}\left( \lambda t\right) }\right) \kappa =\frac{\omega ^{2}\left( \lambda
^{2}+\omega ^{2}\right) ^{2}}{\kappa ^{3}}.  \label{ParamErmakovEquation}
\end{equation}%
The simplest positive energy integral for our parametric oscillator (\ref%
{ParamHam}) is given by%
\begin{eqnarray}
E &=&\left( \omega ^{2}+\lambda ^{2}\tanh ^{2}\left( \lambda t\right)
\right) \ p^{2}+\lambda ^{3}\frac{\sinh \left( \lambda t\right) }{\cosh
^{3}\left( \lambda t\right) }\ \left( px+xp\right)  \label{ParamEnergy} \\
&&+\frac{\lambda ^{6}\sinh ^{2}\left( \lambda t\right) +\omega ^{2}\left(
\lambda ^{2}+\omega ^{2}\right) ^{2}\cosh ^{6}\left( \lambda t\right) }{%
\cosh ^{6}\left( \lambda t\right) \left( \omega ^{2}+\lambda ^{2}\tanh
^{2}\left( \lambda t\right) \right) }\ x^{2}.  \notag
\end{eqnarray}%
Another possibility is to take a general solution of (\ref{ParamChar}) with $%
c_{0}=0.$

\subsection{An Extension to General Quadratic Hamiltonians}

We consider the following generalization of the Lewis--Riesenfeld invariant (%
\ref{ParamQuadInv}) (see also \cite{Leach90}, \cite%
{Yeon:Lee:Um:George:Pandey93}).

\begin{theorem}
The dynamical invariants for the general quadratic Hamiltonian (\ref{GenHam}%
) are given by%
\begin{equation}
E=\frac{1}{\mu _{1}}\left( \kappa \ p-\frac{1}{2a}\frac{d\kappa }{dt}\
x\right) ^{2}+\frac{C_{0}}{\mu _{2}\kappa ^{2}}\ x^{2},  \label{GenQuadInv}
\end{equation}%
where $C_{0}$ is a constant,%
\begin{equation}
\mu _{1}=\exp \left( -\int_{0}^{t}\left( 3c+d\right) \ ds\right) ,\quad \mu
_{2}=\exp \left( \int_{0}^{t}\left( c+3d\right) \ ds\right) ,
\label{IntFacts}
\end{equation}%
and $\kappa $ satisfies the auxiliary nonlinear equation:%
\begin{equation}
k\frac{d}{dt}\left( k\frac{d\kappa }{dt}\right) +4abk^{2}\kappa =\frac{C_{0}%
}{\kappa ^{3}},  \label{AuxEq}
\end{equation}%
where%
\begin{equation}
k=\frac{1}{2a}\exp \left( 2\int_{0}^{t}\left( c+d\right) \ ds\right) .
\label{Key}
\end{equation}
(For the self-adjoint Hamiltonians $c=d.$)
\end{theorem}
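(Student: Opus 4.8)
The plan is to reduce the assertion to the linear system generated by Lemma~2 and then verify that the proposed $E$ solves it. First I would expand the square in (\ref{GenQuadInv}), treating $\kappa$, $a$ and the integrating factors as scalar functions of $t$, which puts $E$ into the canonical form $E=A\,p^{2}+B\,x^{2}+C\,px+D\,xp$ with
\[
A=\frac{\kappa^{2}}{\mu_{1}},\qquad C=D=-\frac{\kappa\kappa'}{2a\mu_{1}},\qquad B=\frac{(\kappa')^{2}}{4a^{2}\mu_{1}}+\frac{C_{0}}{\mu_{2}\kappa^{2}}.
\]
Since $C=D$ automatically, the coefficients of $\langle px\rangle$ and $\langle xp\rangle$ in (\ref{GenSys}) give the same equation, and the requirement $\frac{d}{dt}\langle E\rangle=0$ collapses to three conditions on $A,B,C$.

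For the coefficient of $\langle p^{2}\rangle$ I would differentiate $A=\kappa^{2}/\mu_{1}$ using $\mu_{1}'=-(3c+d)\mu_{1}$: the $(3c+d)\kappa^{2}$ terms cancel against $-(3c+d)A$ and the residual $2\kappa\kappa'/\mu_{1}$ cancels against $2a(C+D)$, so this condition is an identity built into the definition of $\mu_{1}$. For the coefficient of $\langle px\rangle$ I would insert $A,B,C$ into $\frac{dC}{dt}+2(aB-bA)-(c-d)C$; after clearing denominators and dividing by $\kappa$ the $(\kappa')^{2}$ contributions cancel and the surviving relation is
\[
\kappa''-\left(\frac{a'}{a}-2(c+d)\right)\kappa'+4ab\,\kappa=\frac{4a^{2}\mu_{1}C_{0}}{\mu_{2}\kappa^{3}}.
\]
The decisive step is to identify this with (\ref{AuxEq}): expanding $k\frac{d}{dt}(k\kappa')=k^{2}\kappa''+kk'\kappa'$ and dividing by $k^{2}$ rewrites (\ref{AuxEq}) as $\kappa''+\frac{k'}{k}\kappa'+4ab\kappa=C_{0}/(k^{2}\kappa^{3})$, and from $\ln k=-\ln(2a)+2\int_{0}^{t}(c+d)\,ds$ one reads off $k'/k=-a'/a+2(c+d)$, so the left-hand sides coincide. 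The right-hand sides match through the identity $1/k^{2}=4a^{2}\mu_{1}/\mu_{2}$, which follows from $\mu_{1}/\mu_{2}=\exp\!\big(-4\int_{0}^{t}(c+d)\,ds\big)$ and the definition (\ref{Key}) of $k$.

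It remains to check the coefficient of $\langle x^{2}\rangle$, namely $\frac{dB}{dt}-2b(C+D)+(c+3d)B$. Splitting $B=B_{1}+B_{2}$ with $B_{1}=(\kappa')^{2}/(4a^{2}\mu_{1})$ and $B_{2}=C_{0}/(\mu_{2}\kappa^{2})$, the part $\frac{dB_{2}}{dt}+(c+3d)B_{2}$ collapses to $-2C_{0}\kappa'/(\mu_{2}\kappa^{3})$ because $\mu_{2}'/\mu_{2}=c+3d$, while $\frac{dB_{1}}{dt}+(c+3d)B_{1}$ combined with $-2b(C+D)$ equals $\frac{\kappa'}{2a^{2}\mu_{1}}$ times precisely the left-hand side of the displayed second-order equation. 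Invoking (\ref{AuxEq}) once more turns that factor into $\frac{\kappa'}{2a^{2}\mu_{1}}\cdot\frac{4a^{2}\mu_{1}C_{0}}{\mu_{2}\kappa^{3}}=2C_{0}\kappa'/(\mu_{2}\kappa^{3})$, cancelling the $B_{2}$ contribution, so this last condition is an automatic consequence of (\ref{AuxEq}) rather than an independent constraint. I expect the only real difficulty to be organizational: keeping the logarithmic derivatives of the two integrating factors $\mu_{1},\mu_{2}$ and of the weight $k$ mutually consistent and recognizing the identity $1/k^{2}=4a^{2}\mu_{1}/\mu_{2}$ that reconciles the nonlinear terms. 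Specializing to the self-adjoint case $c=d$ gives $\mu_{1}\mu_{2}=1$, and the further parametric reduction $a=\tfrac12,\ c=d=0$ recovers the Lewis--Riesenfeld invariant (\ref{ParamQuadInv}).
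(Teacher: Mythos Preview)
Your verification is correct: expanding $E$, reading off $A,B,C=D$, and checking the four conditions from Lemma~2 works exactly as you describe, and the key identities $k'/k=-a'/a+2(c+d)$ and $1/k^{2}=4a^{2}\mu_{1}/\mu_{2}$ are precisely what is needed to match the surviving second-order relation with~(\ref{AuxEq}).

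The paper, however, takes a genuinely different route. Rather than verifying the proposed~$E$, it \emph{derives} it: starting from the system (\ref{EquatA})--(\ref{EquatD}), it first splits off the antisymmetric part $D_{1}=(C-D)/2$ (which integrates separately and is discarded as a multiple of the identity), then introduces the integrating factors $\mu_{1},\mu_{2},\mu_{3}=\sqrt{\mu_{1}\mu_{2}}$ and the new variables $\widetilde{A}=\mu_{1}A$, $\widetilde{B}=\mu_{2}B$, $\widetilde{C}=\mu_{3}C_{1}$, and finally a reparametrized time $\tau=\int_{0}^{t}2a\sqrt{\mu_{1}/\mu_{2}}\,ds$. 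In these coordinates the system becomes literally the Lewis--Riesenfeld system (\ref{ParamOscA})--(\ref{ParamOscC}) with $\omega^{2}(\tau)=b\mu_{2}/(a\mu_{1})$, whose solution $\widetilde{A}=\kappa^{2}$, $\widetilde{C}=-\kappa\,d\kappa/d\tau$, $\widetilde{B}=(d\kappa/d\tau)^{2}+C_{0}/\kappa^{2}$ is already known; back-substitution and $d/d\tau=k\,d/dt$ yield~(\ref{GenQuadInv}) and~(\ref{AuxEq}). The payoff of the paper's approach is that it explains \emph{why} the formula takes this shape (it is the LR invariant in disguised coordinates) and simultaneously shows that every quadratic invariant arises this way, since the reduction is invertible. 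Your direct substitution is shorter and entirely self-contained, but it establishes only that the formula gives an invariant; to recover the ``all invariants are of this form'' content you would need to add a parameter count (three constants in the general solution of~(\ref{AuxEq}) matching the three-dimensional solution space of the linear system).
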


The case, $a=1/2,$ $b=\omega ^{2}\left( t\right) /2$ and $c=d=0,$
corresponds to the original invariant (\ref{ParamQuadInv}).

\begin{proof}
By Lemma~2 in order to find quadratic invariants of the form%
\begin{equation}
E=Ap^{2}+Bx^{2}+Cpx+Dxp  \label{Energy}
\end{equation}%
we have to solve the following system of ordinary differential equations:%
\begin{eqnarray}
\frac{dA}{dt}+2a\left( C+D\right) -\left( 3c+d\right) A &=&0,  \label{EquatA}
\\
\frac{dB}{dt}-2b\left( C+D\right) +\left( c+3d\right) B &=&0,  \label{EquatB}
\\
\frac{dC}{dt}+2\left( aB-bA\right) -\left( c-d\right) C &=&0,  \label{EquatC}
\\
\frac{dD}{dt}+2\left( aB-bA\right) -\left( c-d\right) D &=&0,  \label{EquatD}
\end{eqnarray}%
say, for arbitrary analytic coefficients $a\left( t\right) ,$ $b\left(
t\right) ,$ $c\left( t\right) $ and $d\left( t\right) .$ The substitution $%
C=C_{1}+D_{1},$ $D=C_{1}-D_{1}$ allows one to transform the last two
equations:%
\begin{eqnarray}
&&\frac{dC_{1}}{dt}+2\left( aB-bA\right) -\left( c-d\right) C_{1}=0,
\label{EquatC1} \\
&&\frac{dD_{1}}{dt}=\left( c-d\right) D_{1},\quad D_{1}=\text{constant\ }%
\exp \left( \int_{0}^{t}\left( c-d\right) \ ds\right) .  \label{EquatD1}
\end{eqnarray}%
Then%
\begin{equation*}
Cpx+Dxp=C_{1}\left( px+xp\right) +D_{1}\left( px-xp\right)
\end{equation*}%
and, in view of the canonical commutation relation, the coefficient $D_{1}$
can be eliminated from the consideration as belonging to the linear
invariants (see appendix~C).

Introducing integrating factors into (\ref{EquatA}), (\ref{EquatB}) and (\ref%
{EquatC1}), we get%
\begin{eqnarray}
&&\frac{d}{dt}\left( \mu _{1}A\right) +4a\mu _{1}C_{1}=0,\qquad \frac{\mu
_{1}^{\prime }}{\mu _{1}}=-3c-d, \\
&&\frac{d}{dt}\left( \mu _{2}B\right) -4b\mu _{2}C_{1}=0,\qquad \frac{\mu
_{2}^{\prime }}{\mu _{2}}=c+3d, \\
&&\frac{d}{dt}\left( \mu _{3}C_{1}\right) +2\mu _{3}\left( aB-bA\right)
=0,\qquad \frac{\mu _{3}^{\prime }}{\mu _{3}}=-c+d
\end{eqnarray}%
with $\mu _{3}^{2}=\mu _{1}\mu _{2}.$ After the substitution%
\begin{equation}
\widetilde{A}=\mu _{1}A,\qquad \widetilde{B}=\mu _{2}B,\qquad \widetilde{C}%
=\mu _{3}C_{1},  \label{SubTilde}
\end{equation}%
the system takes the form%
\begin{eqnarray}
&&\frac{d\widetilde{A}}{dt}+4a\sqrt{\frac{\mu _{1}}{\mu _{2}}}\ \widetilde{C}%
=0, \\
&&\frac{d\widetilde{B}}{dt}-4b\sqrt{\frac{\mu _{2}}{\mu _{1}}}\ \widetilde{C}%
=0, \\
&&\frac{d\widetilde{C}}{dt}+2\left( a\sqrt{\frac{\mu _{1}}{\mu _{2}}}\ 
\widetilde{B}-b\sqrt{\frac{\mu _{2}}{\mu _{1}}}\ \widetilde{A}\right) =0.
\end{eqnarray}%
Introducing a \textquotedblleft proper time\textquotedblright :%
\begin{equation}
\tau =\int_{0}^{t}2a\sqrt{\frac{\mu _{1}}{\mu _{2}}}\ ds,  \label{PropTime}
\end{equation}%
we finally obtain:%
\begin{eqnarray}
&&\frac{d\widetilde{A}}{d\tau }+2\widetilde{C}=0,  \label{LRA} \\
&&\frac{d\widetilde{B}}{d\tau }-2\omega ^{2}\left( \tau \right) \widetilde{C}%
=0,  \label{LRB} \\
&&\frac{d\widetilde{C}}{d\tau }+\widetilde{B}-\omega ^{2}\left( \tau \right) 
\widetilde{A}=0,\quad \omega ^{2}\left( \tau \right) =\frac{b\mu _{2}}{a\mu
_{1}},  \label{LRC}
\end{eqnarray}%
which is identical to the original Lewis--Riesenfeld system (\ref{ParamOscA}%
)--(\ref{ParamOscC}) (positivity of $\omega ^{2}$ is not required). The
solution is given by%
\begin{equation}
\widetilde{A}=\kappa ^{2},\quad \widetilde{B}=\left( \frac{d\kappa }{d\tau }%
\right) ^{2}+\frac{C_{0}}{\kappa ^{2}},\quad \widetilde{C}=-\kappa \frac{%
d\kappa }{d\tau },  \label{LRSysSol}
\end{equation}%
where $\kappa $ satisfies the Ermakov equation:%
\begin{equation}
\frac{d^{2}\kappa }{d\tau ^{2}}+\omega ^{2}\left( \tau \right) \kappa =\frac{%
C_{0}}{\kappa ^{3}},\quad \omega ^{2}\left( \tau \right) =\frac{b\mu _{2}}{%
a\mu _{1}},  \label{ErmakovEquation}
\end{equation}%
with respect to the new time (\ref{PropTime}). In view of%
\begin{equation}
\frac{d}{d\tau }=k\frac{d}{dt},\qquad k=\frac{1}{2a}\exp \left(
2\int_{0}^{t}\left( c+d\right) \ ds\right) ,  \label{DiffTau}
\end{equation}%
the Ermakov equation (\ref{ErmakovEquation}) is transformed into our
auxiliary equation (\ref{AuxEq}). A back substitution results in the
dynamical invariant (\ref{GenQuadInv}) when the square is completed.
\end{proof}

\begin{lemma}
The dynamical invariant (\ref{GenQuadInv}) can be represented in more
symmetric form%
\begin{eqnarray}
E &=&\left( \left( \mu \ p-\frac{1}{2a}\left( \frac{d\mu }{dt}-\left(
c+d\right) \mu \right) \ x\right) ^{2}+\frac{C_{0}}{\mu ^{2}}\ x^{2}\right)
\label{InvSymmForm} \\
&&\times \exp \left( \int_{0}^{t}\left( c-d\right) \ ds\right) ,  \notag
\end{eqnarray}%
where $C_{0}$ is a constant and $\mu $ is a solution of the following
auxiliary equation:%
\begin{equation}
\mu ^{\prime \prime }-\frac{a^{\prime }}{a}\mu ^{\prime }+\left( 4ab+\left( 
\frac{a^{\prime }}{a}-c-d\right) \left( c+d\right) -c^{\prime }-d^{\prime
}\right) \mu =C_{0}\frac{\left( 2a\right) ^{2}}{\mu ^{3}}.
\label{AuxEquation}
\end{equation}
\end{lemma}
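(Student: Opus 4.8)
The plan is to exhibit a single change of dependent variable that simultaneously converts the auxiliary equation (\ref{AuxEq}) into (\ref{AuxEquation}) and the invariant (\ref{GenQuadInv}) into (\ref{InvSymmForm}). Matching the coefficient of $p^{2}$ in the two expressions for $E$ — namely $\kappa ^{2}/\mu _{1}$ in (\ref{GenQuadInv}) against $\mu ^{2}\exp (\int _{0}^{t}(c-d)\,ds)$ in (\ref{InvSymmForm}) — suggests setting
\begin{equation*}
\mu =\kappa \,\phi ,\qquad \phi =\exp \left( \int _{0}^{t}(c+d)\ ds\right) ,
\end{equation*}
so that $\phi ^{\prime }=(c+d)\phi $ and, by (\ref{Key}), $k=\phi ^{2}/(2a)$. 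The first payoff is the identity $\mu ^{\prime }-(c+d)\mu =\kappa ^{\prime }\phi $, which I would record at the outset since it enters both halves of the argument.

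For the ODE, I would substitute $\kappa =\mu /\phi $ into (\ref{AuxEq}). Writing $k\kappa ^{\prime }=(\phi /(2a))(\mu ^{\prime }-(c+d)\mu )$ and differentiating, the prefactor contributes the factor $(c+d)-a^{\prime }/a$ through $(\phi /(2a))^{\prime }=(\phi /(2a))((c+d)-a^{\prime }/a)$, while differentiating $\mu ^{\prime }-(c+d)\mu $ produces $\mu ^{\prime \prime }-(c+d)\mu ^{\prime }-(c^{\prime }+d^{\prime })\mu $. Combining these, the $(c+d)\mu ^{\prime }$ terms cancel and one finds
\begin{equation*}
k\frac{d}{dt}\left( k\frac{d\kappa }{dt}\right) =\frac{\phi ^{3}}{(2a)^{2}}\left( \mu ^{\prime \prime }-\frac{a^{\prime }}{a}\mu ^{\prime }+\left( \frac{a^{\prime }}{a}(c+d)-(c+d)^{2}-c^{\prime }-d^{\prime }\right) \mu \right) .
\end{equation*}
Since $4abk^{2}\kappa =4ab\,\phi ^{3}\mu /(2a)^{2}$ and $C_{0}/\kappa ^{3}=C_{0}\phi ^{3}/\mu ^{3}$, every term on the left carries the common factor $\phi ^{3}/(2a)^{2}$ and the right carries $\phi ^{3}$; cancelling $\phi ^{3}$, multiplying by $(2a)^{2}$, and rewriting $(a^{\prime }/a)(c+d)-(c+d)^{2}=(a^{\prime }/a-c-d)(c+d)$ yields exactly (\ref{AuxEquation}).

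For the invariant itself, I would expand both quadratic forms into their $p^{2}$, $px+xp$, and $x^{2}$ coefficients and compare. Using $\mu ^{\prime }-(c+d)\mu =\kappa ^{\prime }\phi $, the cross term of (\ref{InvSymmForm}) is $-(\kappa \kappa ^{\prime }\phi ^{2}/2a)\exp (\int _{0}^{t}(c-d)\,ds)$ and its $p^{2}$ coefficient is $\kappa ^{2}\phi ^{2}\exp (\int _{0}^{t}(c-d)\,ds)$. With $\psi =\exp (\int _{0}^{t}(c-d)\,ds)$, the two bookkeeping identities $\phi ^{2}\psi =1/\mu _{1}$ and $\phi ^{2}/\psi =\mu _{2}$, both immediate from (\ref{IntFacts}), match the $p^{2}$ and $px+xp$ coefficients at once and split the $x^{2}$ coefficient into the completed-square part (governed by $\phi ^{2}\psi =1/\mu _{1}$) and the part $C_{0}/\mu ^{2}$ (governed by $\phi ^{2}/\psi =\mu _{2}$). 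This identifies the two expressions for $E$.

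The only delicate point is the ODE transformation: one must differentiate the product $k\,d\kappa /dt$ carefully and track the $a^{\prime }/a$ and $c^{\prime }+d^{\prime }$ contributions so that the coefficient of $\mu $ collapses to the factored form appearing in (\ref{AuxEquation}). Once the two exponential identities are in hand, the comparison of invariants is then purely algebraic.
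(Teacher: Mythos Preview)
Your proof is correct and follows exactly the approach indicated in the paper: the substitution $\mu=\kappa\,\exp\bigl(\int_{0}^{t}(c+d)\,ds\bigr)$ (equivalently, $\kappa=\mu\,\exp\bigl(-\int_{0}^{t}(c+d)\,ds\bigr)$) applied to (\ref{GenQuadInv}) and (\ref{AuxEq}). The paper's own proof merely names this substitution without carrying out the computation, so your write-up supplies precisely the details that are left implicit there.
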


\begin{proof}
Use the substitution%
\begin{equation}
\kappa =\mu \exp \left( -\int_{0}^{t}\left( c+d\right) \ ds\right)
\label{MuSubstitution}
\end{equation}%
in (\ref{GenQuadInv}) and (\ref{AuxEq}). A somewhat different proof is given
in \cite{Suslov10}.
\end{proof}

The corresponding classical invariant is discussed, for example, in Refs.~%
\cite{Symon70} and \cite{Yeon:Lee:Um:George:Pandey93}. (Compare also our
expression (\ref{InvSymmForm}) with the one given in the last paper for the
self-adjoint case; we give a detailed proof for the non-self-adjoint
Hamiltonians and emphasize connection with the Ermakov equation.)\medskip

It is worth noting, in conclusion, that, if $\mu _{1}$ and $\mu _{2}$ are
two linearly independent solutions of the linear equation:%
\begin{equation}
\mu ^{\prime \prime }-\frac{a^{\prime }}{a}\mu ^{\prime }+\left( 4ab+\left( 
\frac{a^{\prime }}{a}-c-d\right) \left( c+d\right) -c^{\prime }-d^{\prime
}\right) \mu =0,  \label{LinEquation}
\end{equation}%
the general solution of the nonlinear auxiliary equation (\ref{AuxEquation})
is given by%
\begin{equation}
\mu =\left( A\mu _{1}^{2}+2B\mu _{1}\mu _{2}+C\mu _{2}^{2}\right) ^{1/2},
\label{SolNonLinEquation}
\end{equation}%
where the constants $A,$ $B$ and $C$ are related according to%
\begin{equation}
AC-B^{2}=C_{0}\frac{\left( 2a\right) ^{2}}{W^{2}\left( \mu _{1},\mu
_{2}\right) }  \label{NonLinWronskian}
\end{equation}%
with $W\left( \mu _{1},\mu _{2}\right) =\mu _{1}\mu _{2}^{\prime }-\mu
_{1}^{\prime }\mu _{2}=constant\ \left( 2a\right) $ being the Wronskian of
the two linearly independent solutions. This is a simple extension of
Pinney's solution (\ref{ErmakovEqPinneySolution}); our equations (\ref%
{AuxEquation}) and (\ref{LinEquation}) form the generalized Ermakov system 
\cite{Eliezer:Gray76}, \cite{PadillaMaster}. Further generalization of the
superposition formula (\ref{SolNonLinEquation})--(\ref{NonLinWronskian}) is
discussed in Ref.~\cite{Suslov10}. (If $C_{0}\neq 0,$ the substitution $\mu
\rightarrow $ $C_{0}^{1/4}\mu $ reduces equation (\ref{AuxEquation}) to a
similar form with $C_{0}=1.)$ Special case of the time-dependent damped
harmonic oscillator, when $a=e^{-F\left( t\right) }/2,$ $b=\omega ^{2}\left(
t\right) e^{F\left( t\right) }/2,$ $F\left( t\right) =\int_{0}^{t}f\left(
s\right) \ ds$ and $c=d=0,$ is discussed in \cite{LeachAmJPhys78}, \cite%
{LeachSIAM78}.

\subsection{An Example}

The simplest energy operators have been already discussed in section~4.1 for
all models of quantum oscillators under consideration. In order to
demonstrate how the general approach works we discuss the united Hamiltonian
(\ref{UMHam}), when $a=\left( \omega _{0}/2\right) e^{-2\lambda t},$ $%
b=\left( \omega _{0}/2\right) e^{2\lambda t}$ and $c=0,$ $d=-\mu .$ A direct
calculation shows that the function%
\begin{equation}
\kappa =\sqrt{\frac{\omega _{0}}{2}}e^{-\lambda t}  \label{UMkappa}
\end{equation}%
satisfies the following equation%
\begin{equation}
\kappa ^{\prime \prime }+2\lambda \kappa ^{\prime }+\omega _{0}^{2}\kappa
=\left( \frac{\omega _{0}\omega }{2}\right) ^{2}\frac{e^{-4\lambda t}}{%
\kappa ^{3}},\quad \omega ^{2}=\omega _{0}^{2}-\left( \lambda -\mu \right)
^{2}>0,  \label{UMAuxEq}
\end{equation}%
which corresponds to the nonlinear auxiliary equation (\ref{AuxEquation})
with $C_{0}=\omega ^{2}/4.$ The quadratic invariant (\ref{InvSymmForm})
simplifies to the previously found expression (\ref{EOUM}). Solution (\ref%
{SolNonLinEquation}) can be used for the most general case. Details are left
to the reader.

\subsection{Factorization of the Dynamical Invariant}

Following Ref.~\cite{Cor-Sot:Sua:Sus} the energy operator (\ref{InvSymmForm}%
) can be presented in the standard harmonic oscillator form: 
\begin{equation}
E=\frac{\omega \left( t\right) }{2}\left( \widehat{a}\left( t\right) 
\widehat{a}^{\dagger }\left( t\right) +\widehat{a}^{\dagger }\left( t\right) 
\widehat{a}\left( t\right) \right) ,  \label{EnOperFactor}
\end{equation}%
where%
\begin{equation}
\omega \left( t\right) =\omega _{0}\exp \left( \int_{0}^{t}\left( c-d\right)
\ ds\right) ,\qquad \omega _{0}=2\sqrt{C_{0}}>0,  \label{omega(t)}
\end{equation}%
\begin{eqnarray}
\widehat{a}\left( t\right) &=&\left( \frac{\sqrt{\omega _{0}}}{2\mu }-i\frac{%
\mu ^{\prime }-\left( c+d\right) \mu }{2a\sqrt{\omega _{0}}}\right) x+\frac{%
\mu }{\sqrt{\omega _{0}}}\frac{\partial }{\partial x},  \label{a(t)} \\
\widehat{a}^{\dagger }\left( t\right) &=&\left( \frac{\sqrt{\omega _{0}}}{%
2\mu }+i\frac{\mu ^{\prime }-\left( c+d\right) \mu }{2a\sqrt{\omega _{0}}}%
\right) x-\frac{\mu }{\sqrt{\omega _{0}}}\frac{\partial }{\partial x},
\label{across(t)}
\end{eqnarray}%
and $\mu $ is a solution of the nonlinear auxiliary equation (\ref%
{AuxEquation}). Here the time-dependent annihilation $\widehat{a}\left(
t\right) $ and creation $\widehat{a}^{\dagger }\left( t\right) $ operators
satisfy the usual commutation relation:%
\begin{equation}
\widehat{a}\left( t\right) \widehat{a}^{\dagger }\left( t\right) -\widehat{a}%
^{\dagger }\left( t\right) \widehat{a}\left( t\right) =1.
\label{commutatora(t)across(t)}
\end{equation}%
The oscillator-type spectrum and the corresponding time-dependent
eigenfunctions of the dynamical invariant $E$ can be obtain now in a
standard way by using the Heisenberg--Weyl algebra of the rasing and
lowering operators (a \textquotedblleft second
quantization\textquotedblright\ \cite{Lewis:Riesen69}, the Fock states).
Explicit solution of the Cauchy initial value problem in terms of the
quadratic invariant eigenfunction expansion is found in Ref.~\cite{Suslov10}%
. In addition the $n$-dimensional oscillator wave functions form a basis of
the irreducible unitary representation of the Lie algebra of the noncompact
group $SU\left( 1,1\right) $ corresponding to the discrete positive series $%
\mathcal{D}_{+}^{j}$ (see \cite{Me:Co:Su}, \cite{Ni:Su:Uv} and \cite%
{Smir:Shit}).\smallskip\ Our operators (\ref{a(t)})--(\ref{across(t)}) allow
one to extend these group-theoretical properties to the general dynamical
invariant (\ref{EnOperFactor}). We shall further elaborate on these
connections elsewhere.

\section{Application to the Cauchy Initial Value Problems}

Explicit solution of the initial value problem in terms of eigenfunctions of
the general quadratic invariant is given in Ref.~\cite{Suslov10}. Here we
formulate the following uniqueness result.

\begin{lemma}
Suppose that the expectation value%
\begin{equation}
\left\langle H_{0}\right\rangle =\left\langle \psi ,H_{0}\psi \right\rangle
\geq 0  \label{exppos}
\end{equation}%
for a positive quadratic operator%
\begin{equation}
H_{0}=f\left( t\right) \left( \alpha \left( t\right) p+\beta \left( t\right)
x\right) ^{2}+g\left( t\right) x^{2}\qquad \left( f\left( t\right) \geq 0,\
g\left( t\right) >0\right)  \label{posop}
\end{equation}%
($\alpha \left( t\right) $ and $\beta \left( t\right) $ are real-valued
functions) vanishes for all $t\in \lbrack 0,T):$%
\begin{equation}
\left\langle H_{0}\right\rangle =\left\langle H_{0}\right\rangle \left(
t\right) =\left\langle H_{0}\right\rangle \left( 0\right) =0,  \label{indata}
\end{equation}%
when $\psi \left( x,0\right) =0$ almost everywhere. Then the corresponding
Cauchy initial value problem%
\begin{equation}
i\frac{\partial \psi }{\partial t}=H\psi ,\qquad \psi \left( x,0\right)
=\varphi \left( x\right)  \label{Cauchyivp}
\end{equation}%
may have only one solution, when $x\psi \left( x,t\right) \in L^{2}\left( 
\mathbb{R}
\right) $ (if $H_{0}=g\left( t\right) I,$ where $I=id$ is the identity
operator, $\psi \in L^{2}\left( 
\mathbb{R}
\right) $).
\end{lemma}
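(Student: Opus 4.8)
The plan is to argue by the classical energy-invariant method for linear evolution equations, the whole point being that $H_{0}$ is a \emph{positive} dynamical invariant. First I would reduce the uniqueness assertion to a vanishing statement: if $\psi _{1}$ and $\psi _{2}$ both solve the Cauchy problem (\ref{Cauchyivp}) with the same initial datum $\varphi $, then by linearity of the Schr\"{o}dinger equation (\ref{in9}) their difference $\psi =\psi _{1}-\psi _{2}$ solves $i\,\partial \psi /\partial t=H\psi $ with $\psi \left( x,0\right) =0$ almost everywhere, so it suffices to show that this $\psi $ vanishes identically. Applying the standing hypothesis (\ref{indata}) to $\psi $, conservation of $\left\langle H_{0}\right\rangle $ together with $\left\langle H_{0}\right\rangle \left( 0\right) =0$ forces $\left\langle H_{0}\right\rangle \left( t\right) =0$ for every $t\in \lbrack 0,T)$.

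The key step is to extract positivity from the factored form (\ref{posop}). Since $\alpha \left( t\right) $ and $\beta \left( t\right) $ are real-valued and $p=-i\partial /\partial x$ and $x$ are (formally) self-adjoint, the operator $\alpha \left( t\right) p+\beta \left( t\right) x$ is self-adjoint, so that integration by parts gives
\begin{equation*}
\left\langle H_{0}\right\rangle =f\left( t\right) \left\Vert \left( \alpha p+\beta x\right) \psi \right\Vert ^{2}+g\left( t\right) \left\Vert x\psi \right\Vert ^{2},
\end{equation*}
a sum of two nonnegative terms because $f\left( t\right) \geq 0$ and $g\left( t\right) >0$. With $\left\langle H_{0}\right\rangle =0$ the second term must vanish, whence $\left\Vert x\psi \right\Vert ^{2}=0$ and therefore $x\,\psi \left( x,t\right) =0$ for almost every $x$ and each fixed $t$. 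Since the single point $x=0$ is a null set, this yields $\psi \left( x,t\right) =0$ almost everywhere, i.e.\ $\psi _{1}=\psi _{2}$, which is the desired uniqueness.

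I expect the main obstacle to be analytic rather than algebraic: to write $\left\langle \psi ,\left( \alpha p+\beta x\right) ^{2}\psi \right\rangle =\left\Vert \left( \alpha p+\beta x\right) \psi \right\Vert ^{2}$ and $\left\langle x^{2}\right\rangle =\left\Vert x\psi \right\Vert ^{2}$ legitimately, one must justify the boundary-term cancellations in the integration by parts and guarantee finiteness of the expectation values. This is exactly why the admissible solution class is restricted by the requirement $x\,\psi \left( x,t\right) \in L^{2}\left( \mathbb{R}\right) $, which controls $\left\Vert x\psi \right\Vert $; the parallel control of $\partial \psi /\partial x$ needed for the $p$-term belongs to the solution spaces in which the propagator (\ref{in2}) acts (cf.\ the references on interchanging differentiation and integration cited after (\ref{diffops})). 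In the degenerate case $H_{0}=g\left( t\right) I$ the argument collapses to $\left\langle H_{0}\right\rangle =g\left( t\right) \left\Vert \psi \right\Vert ^{2}$, so $g\left( t\right) >0$ forces $\left\Vert \psi \right\Vert =0$ directly, and membership $\psi \in L^{2}\left( \mathbb{R}\right) $ alone suffices. Finally, to see that the conservation hypothesis is not vacuous I would appeal to Lemma~1 (formula (\ref{diffexp})) and the defining relation (\ref{defenergy}): a positive energy operator $H_{0}$ of the stated shape does solve the invariance system, so such an $H_{0}$ is genuinely available for the Hamiltonians under consideration.
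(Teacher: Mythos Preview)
Your argument is correct and mirrors the paper's proof almost line for line: reduce to zero initial data by linearity, invoke hypothesis~(\ref{indata}) to get $\langle H_{0}\rangle(t)=0$, split $\langle H_{0}\rangle$ as a sum of two nonnegative $L^{2}$-norms via self-adjointness of $\alpha p+\beta x$, and conclude $x\psi=0$ almost everywhere, hence $\psi=0$.

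One conceptual correction, though: you frame the argument around $H_{0}$ being a \emph{dynamical invariant} (``conservation of $\langle H_{0}\rangle$''), and your final paragraph tries to verify this via Lemma~1 and the system~(\ref{energysysA})--(\ref{energysysC}). The paper explicitly remarks, immediately after stating the lemma, that it is \emph{not} assumed that $\frac{d}{dt}\langle H_{0}\rangle\equiv 0$. Hypothesis~(\ref{indata}) asserts only that $\langle H_{0}\rangle(t)$ vanishes identically \emph{when the initial datum is zero}; this is strictly weaker than invariance and is what the subsequent subsections verify case by case (for instance, in \S5.1 the operator $H_{0}$ is the Caldirola--Kanai Hamiltonian itself, which is not an invariant, but $\langle H_{0}\rangle$ obeys the second-order ODE~(\ref{CKdiffeq}) whose unique solution with zero Cauchy data is zero). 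Your proof already works with the hypothesis as stated---just drop the word ``conservation'' and the appeal to the invariance system, and you match the paper exactly.
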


Here it is not assumed that $H_{0}$ is the quantum integral of motion when $%
\frac{d}{dt}\left\langle H_{0}\right\rangle \equiv 0.$

\begin{proof}
If there are two solutions:%
\begin{equation*}
i\frac{\partial \psi _{1}}{\partial t}=H\psi _{1},\qquad i\frac{\partial
\psi _{2}}{\partial t}=H\psi _{2}
\end{equation*}%
with the same initial condition $\psi _{1}\left( x,0\right) =\psi _{2}\left(
x,0\right) =\varphi \left( x\right) ,$ then by the superposition principle
the function $\psi =\psi _{1}-\psi _{2}$ is also a solution with respect to
the zero initial data $\psi \left( x,0\right) =\varphi \left( x\right)
-\varphi \left( x\right) =0.$ By the hypothesis of the lemma%
\begin{equation*}
\left\langle \psi ,H_{0}\psi \right\rangle =f\left( t\right) \left\langle
\left( \alpha p+\beta x\right) \psi ,\left( \alpha p+\beta x\right) \psi
\right\rangle +g\left( t\right) \left\langle x\psi ,x\psi \right\rangle =0
\end{equation*}%
for all $t\in \lbrack 0,T).$ Therefore $x\psi \left( x,t\right) =x\left(
\psi _{1}\left( x,t\right) -\psi _{2}\left( x,t\right) \right) =0$ and $\psi
_{1}\left( x,t\right) =\psi _{2}\left( x,t\right) $ almost everywhere for
all $t>0$ by the axiom of the inner product in $L^{2}\left( 
\mathbb{R}
\right) .$
\end{proof}

In order to apply this lemma to the variable Hamiltonians one has to
identify the corresponding positive operators $H_{0}$ and establish their
required uniqueness dynamics properties with respect to the zero initial
data. In addition to the simplest available dynamical invariant (\ref{b1}),
it is worth exploring other (quadratic) possibilities. The authors believe
that it is interesting and may be important on its own. For example, our
approach gives an opportunity to determine a complete time-evolution of the
standard deviations (\ref{bdp})--(\ref{bdx}) for each of the generalized
harmonic oscillators under consideration. The details will be discussed
elsewhere.

\subsection{The Caldirola-Kanai Hamiltonian}

The required operators are given by%
\begin{equation}
H=H_{0}=\frac{\omega _{0}}{2}\left( e^{-2\lambda t}\ p^{2}+e^{2\lambda t}\
x^{2}\right) ,  \label{CKH1}
\end{equation}%
\begin{equation}
L=\frac{\partial H}{\partial t}=\lambda \omega _{0}\left( -e^{-2\lambda t}\
p^{2}+e^{2\lambda t}\ x^{2}\right) ,  \label{CKH2}
\end{equation}%
\begin{equation}
E=\frac{\omega _{0}}{2}\left( e^{-2\lambda t}\ p^{2}+e^{2\lambda t}\
x^{2}\right) +\frac{\lambda }{2}\left( px+xp\right) ,\quad \frac{d}{dt}%
\left\langle E\right\rangle =0.  \label{CKH3}
\end{equation}%
By (\ref{diffops})%
\begin{equation}
\frac{d}{dt}\left\langle H\right\rangle =\left\langle \frac{\partial H}{%
\partial t}\right\rangle =\left\langle L\right\rangle .  \label{CKH4}
\end{equation}%
Applying formula (\ref{diffexp}) one gets%
\begin{eqnarray}
\frac{d}{dt}\left\langle L\right\rangle &=&2\lambda ^{2}\omega _{0}\left(
e^{-2\lambda t}\ \left\langle p^{2}\right\rangle +e^{2\lambda t}\
\left\langle x^{2}\right\rangle \right)  \label{CKH5} \\
&&+2\lambda \omega _{0}^{2}\left\langle px+xp\right\rangle  \notag
\end{eqnarray}%
and%
\begin{equation}
\frac{d}{dt}\left\langle L\right\rangle +4\omega ^{2}\left\langle
H\right\rangle =4\omega _{0}^{2}\left\langle E\right\rangle _{0}
\label{CKH6}
\end{equation}%
with the help of (\ref{CKH1}) and (\ref{CKH3}).\medskip

In view of (\ref{CKH4}) and (\ref{CKH6}) the dynamics of the Hamiltonian
expectation value $\left\langle H\right\rangle $ is governed by the
following second-order differential equation%
\begin{equation}
\frac{d^{2}}{dt^{2}}\left\langle H\right\rangle +4\omega ^{2}\left\langle
H\right\rangle =4\omega _{0}^{2}\left\langle E\right\rangle _{0}
\label{CKdiffeq}
\end{equation}%
with the unique solution given by%
\begin{equation}
\left\langle H\right\rangle =\frac{\omega ^{2}\left\langle H\right\rangle
_{0}-\omega _{0}^{2}\left\langle E\right\rangle _{0}}{\omega ^{2}}\cos
\left( 2\omega t\right) +\frac{1}{2\omega }\left\langle \frac{\partial H}{%
\partial t}\right\rangle _{0}\sin \left( 2\omega t\right) +\frac{\omega
_{0}^{2}}{\omega ^{2}}\left\langle E\right\rangle _{0}.  \label{CKsol}
\end{equation}%
The hypotheses of Lemma~4 are satisfied. Our solution allows to determine a
complete time-evolution of the expectation values of the operators $p^{2},$ $%
x^{2}$ and $px+xp.$ Further details are left to the reader.

\subsection{The Modified Caldirola-Kanai Hamiltonian}

The required operators are%
\begin{equation}
H=\frac{\omega _{0}}{2}\left( e^{-2\lambda t}\ p^{2}+e^{2\lambda t}\
x^{2}\right) -\lambda \left( px+xp\right) ,  \label{MCKH1}
\end{equation}%
\begin{equation}
L=\frac{\partial H}{\partial t}=\lambda \omega _{0}\left( -e^{-2\lambda t}\
p^{2}+e^{2\lambda t}\ x^{2}\right) =\frac{\partial H_{0}}{\partial t},
\label{MCKH2}
\end{equation}%
\begin{equation}
E=\frac{\omega _{0}}{2}\left( e^{-2\lambda t}\ p^{2}+e^{2\lambda t}\
x^{2}\right) -\frac{\lambda }{2}\left( px+xp\right) .  \label{MCKH3}
\end{equation}%
We consider the expectation value $\left\langle H_{0}\right\rangle $ of the
positive operator%
\begin{equation}
H_{0}=\frac{\omega _{0}}{2}\left( e^{-2\lambda t}\ p^{2}+e^{2\lambda t}\
x^{2}\right) .  \label{MCKH4}
\end{equation}%
In this case $H=2E-H_{0},$ and%
\begin{eqnarray}
\frac{d}{dt}\left\langle H\right\rangle &=&\left\langle \frac{\partial H}{%
\partial t}\right\rangle =\left\langle L\right\rangle =-\frac{d}{dt}%
\left\langle H_{0}\right\rangle ,  \label{MCKH5} \\
\frac{d}{dt}\left\langle L\right\rangle &=&4\omega ^{2}\left\langle
H_{0}\right\rangle -4\omega _{0}^{2}\left\langle E\right\rangle _{0},
\label{MCKH6}
\end{eqnarray}%
which results in the differential equation (\ref{CKdiffeq}) with the
explicit solution%
\begin{equation}
\left\langle H_{0}\right\rangle =\frac{\omega ^{2}\left\langle
H_{0}\right\rangle _{0}-\omega _{0}^{2}\left\langle E\right\rangle _{0}}{%
\omega ^{2}}\cos \left( 2\omega t\right) -\frac{1}{2\omega }\left\langle 
\frac{\partial H_{0}}{\partial t}\right\rangle _{0}\sin \left( 2\omega
t\right) +\frac{\omega _{0}^{2}}{\omega ^{2}}\left\langle E\right\rangle _{0}
\end{equation}%
of the initial value problem. The hypotheses of the lemma are satisfied.

\subsection{The United Model}

The related operators can be conveniently extended as follows%
\begin{equation}
H_{0}=\frac{\omega _{0}}{2}e^{\mu t}\left( e^{-2\lambda t}\
p^{2}+e^{2\lambda t}\ x^{2}\right) ,  \label{UMHamZ}
\end{equation}%
\begin{equation}
L=e^{\mu t}\left( -e^{-2\lambda t}\ p^{2}+e^{2\lambda t}\ x^{2}\right) ,
\label{UMHamL}
\end{equation}%
\begin{equation}
M=e^{\mu t}\left( px+xp\right)  \label{UMHamM}
\end{equation}%
and%
\begin{eqnarray}
E &=&H_{0}\left( t\right) +\frac{1}{2}\left( \lambda -\mu \right) M\left(
t\right)  \label{UMEnergy} \\
&=&\frac{\omega _{0}}{2}e^{\mu t}\left( e^{-2\lambda t}\ p^{2}+e^{2\lambda
t}\ x^{2}\right) +\frac{1}{2}\left( \lambda -\mu \right) e^{\mu t}\left(
px+xp\right) .  \notag
\end{eqnarray}%
Then by Lemma~2%
\begin{equation}
\frac{d}{dt}\left\langle M\right\rangle =-2\omega _{0}\left\langle
L\right\rangle ,  \label{DiffM}
\end{equation}%
\begin{equation}
\frac{d}{dt}\left\langle H_{0}\right\rangle =\omega _{0}\left( \lambda -\mu
\right) \left\langle L\right\rangle ,  \label{DiffHZ}
\end{equation}%
\begin{equation}
\frac{d}{dt}\left\langle E\right\rangle =0  \label{DiffE}
\end{equation}%
and%
\begin{equation}
\frac{d}{dt}\left\langle L\right\rangle =4\frac{\lambda -\mu }{\omega _{0}}%
\left\langle H_{0}\right\rangle +2\omega _{0}\left\langle M\right\rangle .
\label{DiffL}
\end{equation}%
In terms of the energy operator%
\begin{equation}
\frac{d}{dt}\left\langle L\right\rangle +\frac{4\omega ^{2}}{\left( \lambda
-\mu \right) \omega _{0}}\left\langle H_{0}\right\rangle =\frac{4\omega _{0}%
}{\lambda -\mu }\left\langle E\right\rangle  \label{DiffLE}
\end{equation}%
and as a result%
\begin{equation}
\frac{d^{2}}{dt^{2}}\left\langle H_{0}\right\rangle +4\omega
^{2}\left\langle H_{0}\right\rangle =4\omega _{0}^{2}\left\langle
E\right\rangle _{0},\quad \omega =\sqrt{\omega _{0}^{2}-\left( \lambda -\mu
\right) ^{2}}>0  \label{DiffEQUM}
\end{equation}%
with the unique solution of the initial value problem given by%
\begin{eqnarray}
\left\langle H_{0}\right\rangle &=&\frac{\omega ^{2}\left\langle
H_{0}\right\rangle _{0}-\omega _{0}^{2}\left\langle E\right\rangle _{0}}{%
\omega ^{2}}\cos \left( 2\omega t\right)  \label{UMSol} \\
&&+\frac{1}{2}\left( \lambda -\mu \right) \frac{\omega _{0}}{\omega }%
\left\langle L\right\rangle _{0}\sin \left( 2\omega t\right) +\frac{\omega
_{0}^{2}}{\omega ^{2}}\left\langle E\right\rangle _{0}.  \notag
\end{eqnarray}%
The hypotheses of Lemma~4 are satisfied.

\subsection{The Modified Oscillator}

The required operators are%
\begin{eqnarray}
H &=&\left( \cos t\ p+\sin t\ x\right) ^{2}  \label{MC-SS1} \\
&=&\cos ^{2}t\ p^{2}+\sin ^{2}t\ x^{2}+\sin t\cos t\ \left( px+xp\right) 
\notag \\
&=&\frac{1}{2}\left( p^{2}+x^{2}\right) +\frac{1}{2}\cos 2t\ \left(
p^{2}-x^{2}\right) +\frac{1}{2}\sin 2t\ \left( px+px\right)  \notag \\
&=&H_{0}+E\left( t\right) ,  \notag
\end{eqnarray}%
where%
\begin{equation}
H_{0}=\frac{1}{2}\left( p^{2}+x^{2}\right) ,  \label{MC-SS2}
\end{equation}%
\begin{equation}
E=E\left( t\right) =\frac{1}{2}\cos 2t\ \left( p^{2}-x^{2}\right) +\frac{1}{2%
}\sin 2t\ \left( px+px\right)  \label{MC-SS3}
\end{equation}%
and%
\begin{equation}
L=\frac{\partial H}{\partial t}=\frac{\partial E}{\partial t}=-\sin 2t\
\left( p^{2}-x^{2}\right) +\cos 2t\ \left( px+px\right) .  \label{MC-SS4}
\end{equation}%
Here%
\begin{equation}
\frac{d}{dt}\left\langle H_{0}\right\rangle =\frac{d}{dt}\left\langle
H\right\rangle =\left\langle \frac{\partial H}{\partial t}\right\rangle
=\left\langle \frac{\partial E}{\partial t}\right\rangle =\left\langle
L\right\rangle  \label{MC-SS5}
\end{equation}%
and%
\begin{equation}
\frac{d}{dt}\left\langle L\right\rangle =4\left\langle H_{0}\right\rangle .
\label{MC-SS6}
\end{equation}%
The expectation value $\left\langle H_{0}\right\rangle $ satisfies the
following differential equation%
\begin{equation}
\frac{d^{2}}{dt^{2}}\left\langle H_{0}\right\rangle =4\left\langle
H_{0}\right\rangle  \label{MC-SSeq}
\end{equation}%
with the explicit solution%
\begin{equation}
\left\langle H_{0}\right\rangle =\left\langle H_{0}\right\rangle _{0}\cosh
\left( 2t\right) +\frac{1}{2}\left\langle L\right\rangle _{0}\sinh \left(
2t\right) .  \label{MC-SSsol}
\end{equation}%
The hypotheses of Lemma~4 are satisfied.

\subsection{The Modified Damped Oscillator}

Let $\hslash =m\omega _{0}=1$ in the Hamiltonian (\ref{CJHam}): 
\begin{equation}
H=\frac{\omega _{0}}{2}\left( \frac{p^{2}}{\cosh ^{2}\left( \lambda t\right) 
}+\cosh ^{2}\left( \lambda t\right) \ x^{2}\right)  \label{CJham}
\end{equation}%
without loss of generality. The corresponding energy operator can be found
as follows%
\begin{eqnarray}
&&E=\frac{\omega _{0}}{2\cosh ^{2}\left( \lambda t\right) }p^{2}+\frac{%
\omega _{0}^{2}\sinh ^{2}\left( \lambda t\right) +\omega ^{2}}{2\omega _{0}}%
x^{2}  \label{CJEnergy} \\
&&\qquad +\frac{\lambda }{2}\tanh \left( \lambda t\right) \left(
px+xp\right) ,\qquad \frac{d}{dt}\left\langle E\right\rangle =0,  \notag
\end{eqnarray}%
in view of (\ref{energysysA})--(\ref{energysysC}) (one should replace $%
A\leftrightarrow B,$ $C\rightarrow -C$ in the momentum
representation).\medskip

Introducing the following complementary operators%
\begin{eqnarray}
H_{0} &=&\frac{p^{2}}{\cosh ^{2}\left( \lambda t\right) }+\cosh ^{2}\left(
\lambda t\right) \ x^{2},  \label{CJHZ} \\
L &=&\frac{p^{2}}{\cosh ^{2}\left( \lambda t\right) }-\cosh ^{2}\left(
\lambda t\right) \ x^{2},  \label{CJL} \\
M &=&px+xp,  \label{CJM}
\end{eqnarray}%
we get%
\begin{eqnarray}
\frac{d}{dt}\left\langle H_{0}\right\rangle &=&-2\lambda \tanh \left(
\lambda t\right) \left\langle L\right\rangle ,  \label{CJHsys} \\
\frac{d}{dt}\left\langle L\right\rangle &=&-2\lambda \tanh \left( \lambda
t\right) \left\langle H_{0}\right\rangle -2\omega _{0}\left\langle
M\right\rangle ,  \label{CJLsys} \\
\frac{d}{dt}\left\langle M\right\rangle &=&2\omega _{0}\left\langle
L\right\rangle .  \label{CJMsys}
\end{eqnarray}%
Then%
\begin{eqnarray}
E &=&\frac{\omega _{0}}{2}\left( 1-\frac{\lambda ^{2}}{2\omega _{0}^{2}\cosh
^{2}\left( \lambda t\right) }\right) H_{0}+\frac{\lambda ^{2}}{4\omega
_{0}\cosh ^{2}\left( \lambda t\right) }\ L  \label{CJenergy} \\
&&+\frac{\lambda }{2}\tanh \left( \lambda t\right) M  \notag
\end{eqnarray}%
and, eliminating $\left\langle M\right\rangle $ and $\left\langle
L\right\rangle $ from the system, one gets:%
\begin{equation}
\frac{d^{2}}{dt^{2}}\left\langle H_{0}\right\rangle -\frac{4\lambda }{\sinh
\left( 2\lambda t\right) }\frac{d}{dt}\left\langle H_{0}\right\rangle
+2\left( 2\omega ^{2}+\frac{\lambda ^{2}}{\cosh ^{2}\left( \lambda t\right) }%
\right) \left\langle H_{0}\right\rangle =8\omega _{0}\left\langle
E\right\rangle _{0}.  \label{CJEquation}
\end{equation}%
The required initial conditions:%
\begin{equation}
\left( \frac{d}{dt}\left\langle H_{0}\right\rangle \right) _{0}=0,\qquad
\left( \coth \left( \lambda t\right) \frac{d}{dt}\left\langle
H_{0}\right\rangle \right) _{0}=-2\lambda \left\langle L\right\rangle _{0}
\label{CJConditions}
\end{equation}%
follow from (\ref{CJHsys}). The unique explicit solution is given by%
\begin{eqnarray}
\left\langle H_{0}\right\rangle &=&-\lambda \frac{\lambda ^{2}\left\langle
E\right\rangle _{0}+\omega _{0}\omega ^{2}\left\langle L\right\rangle _{0}}{%
\omega _{0}\omega ^{2}\left( 2\omega ^{2}+\lambda ^{2}\right) }
\label{CJSolution} \\
&&\times \left( 2\omega \tanh \left( \lambda t\right) \sin \left( 2\omega
t\right) +\lambda \left( 1+\tanh ^{2}\left( \lambda t\right) \right) \cos
\left( 2\omega t\right) \right)  \notag \\
&&+2\left\langle E\right\rangle _{0}\frac{\omega _{0}}{\omega ^{2}}\left( 1-%
\frac{\lambda ^{2}}{2\omega _{0}^{2}\cosh ^{2}\left( \lambda t\right) }%
\right)  \notag
\end{eqnarray}%
(see appendix~D). The hypotheses of Lemma~4 are satisfied.

\subsection{The Modified Parametric Oscillator}

In the case (\ref{MPOHam}), the energy operator (\ref{EQ4}) is a positive
operator:%
\begin{equation}
\left\langle E\right\rangle =\tanh ^{2}\left( \lambda t+\delta \right) \
\left\langle p^{2}\right\rangle +\coth ^{2}\left( \lambda t+\delta \right) \
\left\langle x^{2}\right\rangle =\left\langle E\right\rangle _{0}>0.
\label{MPOEnergy}
\end{equation}%
The related operators are%
\begin{eqnarray}
L &=&\tanh ^{2}\left( \lambda t+\delta \right) \ p^{2}-\coth ^{2}\left(
\lambda t+\delta \right) \ x^{2},  \label{MPOL} \\
M &=&px+xp,  \label{MPOM} \\
H &=&\frac{\omega }{2}\ E+\frac{\lambda }{\sinh \left( 2\lambda t+2\delta
\right) }\ M  \label{MPOH}
\end{eqnarray}%
with%
\begin{equation}
\frac{d}{dt}\left\langle L\right\rangle =-2\omega \left\langle
M\right\rangle ,\qquad \frac{d}{dt}\left\langle M\right\rangle =-2\omega
\left\langle L\right\rangle .  \label{MPODiffLM}
\end{equation}%
From here%
\begin{equation}
\frac{d^{2}}{dt^{2}}\left\langle L\right\rangle +4\omega ^{2}\left\langle
L\right\rangle =0,\qquad \frac{d^{2}}{dt^{2}}\left\langle M\right\rangle
+4\omega ^{2}\left\langle M\right\rangle =0,  \label{MPODiffEqLM}
\end{equation}%
which determines the time-evolution of the expectation values.

\subsection{Parametric Oscillators}

In general the Lewis--Riesenfeld quadratic invariant (\ref{ParamQuadInv})
for the parametric oscillator (\ref{ParametHam}) is obviously a positive
operator for real-valued solutions of the Ermakov equation (\ref%
{NonlinearODE}) that satisfies the conditions of our lemma.

\subsection{General Quadratic Hamiltonian}

In the case of Hamiltonian (\ref{GenHam}) applying formula (\ref{GenSys}) to
the operators, $O=\left\{ p^{2},x^{2},px+xp\right\} ,$ one obtains \cite%
{Cor-Sot:Sua:Sus}:%
\begin{equation}
\frac{d}{dt}\left( 
\begin{array}{c}
\left\langle p^{2}\right\rangle \smallskip \\ 
\left\langle x^{2}\right\rangle \smallskip \\ 
\left\langle px+xp\right\rangle%
\end{array}%
\right) =\left( 
\begin{array}{ccc}
-3c\left( t\right) -d\left( t\right) & 0 & -2b\left( t\right) \smallskip \\ 
0 & c\left( t\right) +3d\left( t\right) & 2a\left( t\right) \smallskip \\ 
4a\left( t\right) & -4b\left( t\right) & -c\left( t\right) +d\left( t\right)%
\end{array}%
\right) \left( 
\begin{array}{c}
\left\langle p^{2}\right\rangle \smallskip \\ 
\left\langle x^{2}\right\rangle \smallskip \\ 
\left\langle px+xp\right\rangle%
\end{array}%
\right) .  \label{GSystem}
\end{equation}%
This system has a unique solution for suitable coefficients \cite{HilleODE},
which allows one to apply Lemma~4, say, for the positive operator $x^{2}.$
Our Theorem~1 provides another choice of positive operators. On the second
thought a positive integral (\ref{b3}) determines time-evolution of the
squared norm and guarantees uniqueness in $L^{2}\left( 
\mathbb{R}
\right) .$ Details are left to the reader.\medskip

\noindent \textbf{Acknowledgments.\/} We thank Professor Carlos Castillo-Ch%
\'{a}vez, Professor Victor V. Dodonov, Professor Vladimir~I. Man'ko and
Professor Kurt Bernardo Wolf for support, valuable discussions and
encouragement. The authors are indebted to Professor George A.~Hagedorn for
kindly pointing out the papers \cite{Hag:Loss:Slaw} and \cite{Haged98} to
our attention. We thank David Murillo for help. The authors are grateful to
Professor Peter~G.~L.~Leach for careful reading of the manuscript --- his
numerous suggestions have helped to improve the presentation. One of the
authors (RCS) is supported by the following National Science Foundation
programs: Louis Stokes Alliances for Minority Participation (LSAMP): NSF
Cooperative Agreement No. HRD-0602425 (WAESO LSAMP Phase IV); Alliances for
Graduate Education and the Professoriate (AGEP): NSF Cooperative Agreement
No. HRD-0450137 (MGE@MSA AGEP Phase II).

\appendix

\section{The Ehrenfest Theorems}

Application of formula (\ref{diffops}) to the position $x$ and momentum $p$
operators allows one to derive the Ehrenfest theorem \cite{Ehrenfest}, \cite%
{Merz}, \cite{Schiff} for the models of oscillators under
consideration.\medskip

For the Caldirola-Kanai Hamiltonian (\ref{CKham}) one gets%
\begin{equation}
\frac{d}{dt}\left\langle x\right\rangle =\omega _{0}e^{-2\lambda
t}\left\langle p\right\rangle ,\qquad \frac{d}{dt}\left\langle
p\right\rangle =-\omega _{0}e^{2\lambda t}\left\langle x\right\rangle .
\label{a1}
\end{equation}%
Elimination of the expectation value $\left\langle p\right\rangle $ from
this system results in the classical equation of motion for a damped
oscillator \cite{BatemanPDE}, \cite{Lan:Lif}:%
\begin{equation}
\frac{d^{2}}{dt^{2}}\left\langle x\right\rangle +2\lambda \frac{d}{dt}%
\left\langle x\right\rangle +\omega _{0}^{2}\left\langle x\right\rangle =0.
\label{a2}
\end{equation}%
For the modified Caldirola-Kanai Hamiltonian (\ref{modCKham}) the system%
\begin{equation}
\frac{d}{dt}\left\langle x\right\rangle =\omega _{0}e^{-2\lambda
t}\left\langle p\right\rangle -2\lambda \left\langle x\right\rangle ,\qquad 
\frac{d}{dt}\left\langle p\right\rangle =-\omega _{0}e^{2\lambda
t}\left\langle x\right\rangle +2\lambda \left\langle p\right\rangle
\label{a3}
\end{equation}%
gives the same classical equation.\medskip

In the case of the united model (\ref{UMHam}) one should use the
differentiation formula (\ref{DiffOper}). Then%
\begin{equation}
\frac{d}{dt}\left\langle x\right\rangle =\omega _{0}e^{-2\lambda
t}\left\langle p\right\rangle -2\mu \left\langle x\right\rangle ,\qquad 
\frac{d}{dt}\left\langle p\right\rangle =-\omega _{0}e^{2\lambda
t}\left\langle x\right\rangle  \label{a3a}
\end{equation}%
and the second order equations are given by%
\begin{equation}
\frac{d^{2}}{dt^{2}}\left\langle x\right\rangle +\ 2\left( \lambda +\mu
\right) \frac{d}{dt}\left\langle x\right\rangle +\left( \omega
_{0}^{2}+4\lambda \mu \right) \left\langle x\right\rangle =0,  \label{a3x}
\end{equation}%
\begin{equation}
\frac{d^{2}}{dt^{2}}\left\langle p\right\rangle +\ 2\left( \mu -\lambda
\right) \frac{d}{dt}\left\langle p\right\rangle +\omega _{0}^{2}\left\langle
p\right\rangle =0.  \label{a3p}
\end{equation}%
The general solutions are%
\begin{eqnarray}
\left\langle x\right\rangle &=&Ae^{-\left( \lambda +\mu \right) t}\sin
\left( \omega t+\delta \right) ,  \label{a3xsol} \\
\left\langle p\right\rangle &=&Be^{\left( \lambda -\mu \right) t}\sin \left(
\omega t+\gamma \right) ,  \label{a3psol}
\end{eqnarray}%
where $\omega =\sqrt{\omega _{0}^{2}-\left( \lambda -\mu \right) ^{2}}>0.$%
\medskip\ 

In a similar fashion for a modified oscillator with the Hamiltonian (\ref%
{mod1}) we obtain%
\begin{eqnarray}
\frac{d}{dt}\left\langle x\right\rangle &=&2\cos ^{2}t\ \left\langle
p\right\rangle +2\sin t\cos t\ \left\langle x\right\rangle ,  \label{a4} \\
\frac{d}{dt}\left\langle p\right\rangle &=&-2\sin ^{2}t\ \left\langle
x\right\rangle -2\sin t\cos t\ \left\langle p\right\rangle .  \label{a5}
\end{eqnarray}%
Then%
\begin{equation}
\frac{d^{2}}{dt^{2}}\left\langle x\right\rangle +\ 2\tan t\frac{d}{dt}%
\left\langle x\right\rangle -2\left\langle x\right\rangle =0,  \label{a6}
\end{equation}%
which coincides with the characteristic equation (\ref{in6}) in this case 
\cite{Cor-Sot:Sus}.\medskip

In the case of the damped oscillator of Chru\'{s}ci\'{n}ski and Jurkowski
one obtains%
\begin{eqnarray}
\frac{d}{dt}\left\langle x\right\rangle &=&\frac{\left\langle p\right\rangle 
}{m\cosh ^{2}\left( \lambda t\right) },  \label{CJx} \\
\frac{d}{dt}\left\langle p\right\rangle &=&-m\omega _{0}^{2}\cosh ^{2}\left(
\lambda t\right) \left\langle x\right\rangle .  \label{CJp}
\end{eqnarray}%
The Ehrenfest theorems coincide with the Newtonian equations of motion \cite%
{Chru:Jurk}:%
\begin{eqnarray}
\frac{d^{2}}{dt^{2}}\left\langle x\right\rangle +\ 2\lambda \tanh \left(
\lambda t\right) \frac{d}{dt}\left\langle x\right\rangle +\omega
_{0}^{2}\left\langle x\right\rangle &=&0,  \label{CJEhr} \\
\frac{d^{2}}{dt^{2}}\left\langle p\right\rangle -\ 2\lambda \tanh \left(
\lambda t\right) \frac{d}{dt}\left\langle p\right\rangle +\omega
_{0}^{2}\left\langle p\right\rangle &=&0  \label{CJEhrP}
\end{eqnarray}%
with the general solutions given by%
\begin{equation}
\left\langle x\right\rangle =A\frac{\sin \left( \omega t+\delta \right) }{%
\cosh \left( \lambda t\right) },\qquad \omega =\sqrt{\omega _{0}^{2}-\lambda
^{2}}>0,  \label{CJEhrsol}
\end{equation}%
\begin{equation}
\left\langle p\right\rangle =B\left( \lambda \cos \left( \omega t+\delta
\right) \sinh \left( \lambda t\right) +\omega \sin \left( \omega t+\delta
\right) \cosh \left( \lambda t\right) \right) ,  \label{CJEhrsolP}
\end{equation}%
respectively. It is worth noting that both equations (\ref{a2}) and (\ref%
{CJEhr}) give the same frequency of oscillations for the damped motion; see 
\cite{Chru:Jurk} for more details.\medskip

Combining all models together for the general quadratic Hamiltonian (\ref%
{GenHam}):%
\begin{equation}
\frac{d}{dt}\left\langle x\right\rangle =2a\left( t\right) \ \left\langle
p\right\rangle +2d\left( t\right) \ \left\langle x\right\rangle ,\quad \frac{%
d}{dt}\left\langle p\right\rangle =-2b\left( t\right) \ \left\langle
x\right\rangle -2c\left( t\right) \ \left\langle p\right\rangle
\label{GenXP}
\end{equation}%
with the help of (\ref{DiffOper}). The Newtonian-type equation of motion for
the expectation values has the form%
\begin{equation}
\frac{d^{2}}{dt^{2}}\left\langle x\right\rangle -\tau \left( t\right) \frac{d%
}{dt}\left\langle x\right\rangle +4\sigma \left( t\right) \left\langle
x\right\rangle =0  \label{GenEhrenfest}
\end{equation}%
with%
\begin{equation}
\tau \left( t\right) =\frac{a^{\prime }}{a}-2c+2d,\qquad \sigma \left(
t\right) =ab-cd+\frac{d}{2}\left( \frac{a^{\prime }}{a}-\frac{d^{\prime }}{d}%
\right) .  \label{TauSigmaEhr}
\end{equation}%
In order to explain a connection with the characteristic equation (\ref{in6}%
)--(\ref{in7}) we temporarily replace $c\rightarrow c_{0}$ and $d\rightarrow
d_{0}$ in the original Hamiltonian (\ref{in1}). Then it takes the standard
form (\ref{GenHam}), if $c_{0}=c+d$ and $d_{0}=c.$ Using the new notations
in (\ref{in6})--(\ref{in7}) we find%
\begin{equation}
\tau -\tau _{0}=4\left( d-c\right) ,\quad \sigma -\sigma _{0}=\frac{a}{2}%
\left( \frac{c-d}{a}\right) ^{\prime }.  \label{DiffSigmaTau}
\end{equation}%
Therefore our characteristic equation (\ref{in6}) coincides with the
corresponding Ehrenfest theorem (\ref{GenEhrenfest}) only in the case of
self-adjoint Hamiltonians, when $c=d$ (or $c_{0}=2d_{0}$). The united model
shows that these equations are different otherwise.

\section{The Heisenberg Uncertainty Relation Revisited}

A detailed review with an extensive list of references is given in Refs.~%
\cite{Dodonov:Man'koFIAN87CohSt} and \cite{JvNeumann} (see also \cite%
{Mand:Karpov:Cerf}). We only discuss the Heisenberg uncertainty relation for
the position $x$ and momentum $p=-i\partial /\partial x$ operators (in the
units of $\hslash $) in the case of the general quadratic Hamiltonian (\ref%
{GenHam}). By our Lemma~2 the simplest integral of motion is given by%
\begin{equation}
E_{0}=\exp \left( \int_{0}^{t}\left( c\left( \tau \right) -d\left( \tau
\right) \right) \ d\tau \right) \ \left( px-xp\right)  \label{b1}
\end{equation}%
with%
\begin{equation}
\left[ x,p\right] =xp-px=i.  \label{b2}
\end{equation}%
This implies the following time evolution: 
\begin{equation}
\left\langle \psi ,\psi \right\rangle =\exp \left( \int_{0}^{t}\left(
d\left( \tau \right) -c\left( \tau \right) \right) \ d\tau \right)
\left\langle \psi ,\psi \right\rangle _{0}  \label{b3}
\end{equation}%
of the squared norm of the wave functions.\medskip

With the expectation values%
\begin{equation}
\overline{x}=\frac{\left\langle x\right\rangle }{\left\langle 1\right\rangle 
}=\frac{\left\langle \psi ,x\psi \right\rangle }{\left\langle \psi ,\psi
\right\rangle },\qquad \overline{p}=\frac{\left\langle p\right\rangle }{%
\left\langle 1\right\rangle }=\frac{\left\langle \psi ,p\psi \right\rangle }{%
\left\langle \psi ,\psi \right\rangle }  \label{b4}
\end{equation}%
and the operators%
\begin{equation}
\Delta x=x-\overline{x},\qquad \Delta p=p-\overline{p}  \label{b5}
\end{equation}%
let us consider%
\begin{eqnarray}
0 &\leq &\left\langle \left( \Delta x+i\lambda \Delta p\right) \psi ,\left(
\Delta x+i\lambda \Delta p\right) \psi \right\rangle   \label{b6} \\
&=&\left\langle \psi ,\left( \Delta x-i\lambda \Delta p\right) \left( \Delta
x+i\lambda \Delta p\right) \psi \right\rangle   \notag \\
&=&\left\langle \left( \Delta x\right) ^{2}\right\rangle -\lambda
\left\langle 1\right\rangle +\lambda ^{2}\left\langle \left( \Delta p\right)
^{2}\right\rangle   \notag
\end{eqnarray}%
for a real parameter $\lambda .$ Here we have used the operator identity%
\begin{equation}
\left( \Delta x-i\lambda \Delta p\right) \left( \Delta x+i\lambda \Delta
p\right) =\left( \Delta x\right) ^{2}-\lambda +\lambda ^{2}\left( \Delta
p\right) ^{2}.  \label{b7}
\end{equation}%
Then one gets%
\begin{equation}
\left\langle \left( \Delta p\right) ^{2}\right\rangle \left\langle \left(
\Delta x\right) ^{2}\right\rangle \geq \frac{1}{4}\left\langle
1\right\rangle ^{2}=\frac{1}{4}\exp \left( 2\dint_{0}^{t}\left( d\left( \tau
\right) -c\left( \tau \right) \right) \ d\tau \right) ,  \label{b8}
\end{equation}%
if $\left\langle 1\right\rangle _{0}=\left\langle \psi ,\psi \right\rangle
_{0}=1.$ For the standard deviations:%
\begin{equation}
\left( \delta p\right) ^{2}=\frac{\left\langle \left( \Delta p\right)
^{2}\right\rangle }{\left\langle 1\right\rangle }=\overline{\left(
p^{2}\right) }-\left( \overline{p}\right) ^{2},  \label{bdp}
\end{equation}%
\begin{equation}
\left( \delta x\right) ^{2}=\frac{\left\langle \left( \Delta x\right)
^{2}\right\rangle }{\left\langle 1\right\rangle }=\overline{\left(
x^{2}\right) }-\left( \overline{x}\right) ^{2},  \label{bdx}
\end{equation}%
we finally obtain%
\begin{equation}
\delta p\ \delta x\geq \frac{1}{2}  \label{Heisenberg}
\end{equation}%
in the units of $\hslash .$ It is worth noting that Eq.~(\ref{Heisenberg})
is derived, in fact, for any operators $x$ and $p$ with the commutator (\ref%
{b2}) --- the structure of the quadratic Hamiltonian (\ref{GenHam}) has only
been used in the norm (\ref{b3}). Time-evolution of the standard deviations (%
\ref{bdp})--(\ref{bdx}) will be discussed elsewhere.

\section{Linear Integrals of Motion: The Dodonov--Malkin--Man'ko--Trifonov
Invariants}

All invariants of the form%
\begin{equation}
P=A\left( t\right) p+B\left( t\right) x+C\left( t\right)   \label{LinInts}
\end{equation}%
for the general quadratic Hamiltonian (\ref{GenHam}) can be found as follows
(see, for example, \cite{Dod:Man79}, \cite{Dodonov:Man'koFIAN87}, \cite%
{Malkin:Man'ko79}, \cite{Malk:Man:Trif73} and references therein). Use of
the differentiation formula (\ref{DiffOper}) results in the following system:%
\begin{eqnarray}
\frac{dA}{dt} &=&2c\left( t\right) A-2a\left( t\right) B,  \label{LinCA} \\
\frac{dB}{dt} &=&2b\left( t\right) A-2d\left( t\right) B,  \label{LinCB} \\
\frac{dC}{dt} &=&\left( c\left( t\right) -d\left( t\right) \right) C.
\label{LinCC}
\end{eqnarray}%
The last equation is explicitly integrated and elimination of $B$ and $A$
from (\ref{LinCA}) and (\ref{LinCB}), respectively, gives the second-order
equations:%
\begin{eqnarray}
A^{\prime \prime }-\left( \frac{a^{\prime }}{a}+2c-2d\right) A^{\prime
}+4\left( ab-cd+\frac{c}{2}\left( \frac{a^{\prime }}{a}-\frac{c^{\prime }}{c}%
\right) \right) A &=&0,  \label{EqC} \\
B^{\prime \prime }-\left( \frac{b^{\prime }}{b}+2c-2d\right) B^{\prime
}+4\left( ab-cd-\frac{d}{2}\left( \frac{b^{\prime }}{b}-\frac{d^{\prime }}{d}%
\right) \right) B &=&0.  \label{EqBC}
\end{eqnarray}%
The first is equivalent to our characteristic equation (\ref{in6})--(\ref%
{in7}) and coincides with the Ehrenfest theorem (\ref{GenEhrenfest})--(\ref%
{TauSigmaEhr}) when $c\leftrightarrow d.$\medskip 

Thus the linear quantum invariants are given by%
\begin{equation}
P=A\left( t\right) p+\frac{2c\left( t\right) A\left( t\right) -A^{\prime
}\left( t\right) }{2a\left( t\right) }x+C_{0}\exp \left( \int_{0}^{t}\left(
c\left( \tau \right) -d\left( \tau \right) \right) \ d\tau \right) ,
\label{GenLinInv}
\end{equation}%
where $A\left( t\right) $ is a general solution of equation (\ref{EqC})
depending upon two parameters and $C_{0}$ is the third constant. Our
Theorem~1 gives a similar description of the quadratic invariants in terms
of solutions of the auxiliary equation (\ref{AuxEq}). Relations between
linear and quadratic invariants are analyzed in \cite{Suslov10}.
Group-theoretical applications are discussed in \cite{Cher:Man08}, \cite%
{Dod:Mal:Man75}, \cite{Dodonov:Man'koFIAN87}, \cite{Malkin:Man'ko79}, \cite%
{Dod:Man79}, \cite{Lewis:Riesen69}, \cite{Malk:Man:Trif73} and elsewhere.

\section{An Elementary Differential Equation}

The nonhomogeneous differential equation of the form%
\begin{equation}
y^{\prime \prime }-\frac{4\lambda }{\sinh \left( 2\lambda t+2\gamma \right) }%
y^{\prime }+\left( \omega ^{2}+\frac{2\lambda ^{2}}{\cosh ^{2}\left( \lambda
t+\gamma \right) }\right) y=1  \label{DEquation1}
\end{equation}%
($\omega ,$ $\lambda $ and $\gamma $ are some parameters) has the following
general solution:%
\begin{equation}
y=C_{1}y_{1}\left( t\right) +C_{2}y_{2}\left( t\right) +Y\left( t\right) ,
\label{DESolution}
\end{equation}%
where $C_{1}$ and $C_{2}$ are constants,%
\begin{eqnarray}
y_{1} &=&\omega \tanh \left( \lambda t+\gamma \right) \cos \left( \omega
t\right) -\lambda \left( 1+\tanh ^{2}\left( \lambda t+\gamma \right) \right)
\sin \left( \omega t\right) ,  \label{DESoly1} \\
y_{2} &=&\omega \tanh \left( \lambda t+\gamma \right) \sin \left( \omega
t\right) +\lambda \left( 1+\tanh ^{2}\left( \lambda t+\gamma \right) \right)
\cos \left( \omega t\right)  \label{DESoly2}
\end{eqnarray}%
are the fundamental solutions of the corresponding homogeneous equation with
the Wronskian given by%
\begin{equation}
W\left( y_{1},y_{2}\right) =\omega \left( \omega ^{2}+4\lambda ^{2}\right)
\tanh ^{2}\left( \lambda t+\gamma \right) ,  \label{Wronskiany1y2}
\end{equation}%
and%
\begin{equation}
Y=\frac{1}{\omega ^{2}}\left( 1-\frac{2\lambda ^{2}}{\left( \omega
^{2}+4\lambda ^{2}\right) \cosh ^{2}\left( \lambda t+\gamma \right) }\right)
\label{DESoly3}
\end{equation}%
is a particular solution of the nonhomogeneous equation.\medskip

One can also verify that functions:%
\begin{eqnarray}
z_{1} &=&\omega \cos \left( \omega t\right) -\lambda \coth \left( \lambda
t+\gamma \right) \sin \left( \omega t\right) ,  \label{DESolz1} \\
z_{2} &=&\omega \sin \left( \omega t\right) +\lambda \coth \left( \lambda
t+\gamma \right) \cos \left( \omega t\right)  \label{DESolz2}
\end{eqnarray}%
with $W\left( z_{1},z_{2}\right) =\omega \left( \omega ^{2}+\lambda
^{2}\right) $ are fundamental solutions of the following equation:

\begin{equation}
z^{\prime \prime }+\left( \omega ^{2}-\frac{2\lambda ^{2}}{\sinh ^{2}\left(
\lambda t+\gamma \right) }\right) z=0  \label{DEquation2}
\end{equation}%
and then carry out the substitution $y=z\tanh \left( \lambda t+\gamma
\right) .$ Details are left to the reader. The particular solution of the
nonhomogeneous equation can be found by the variation of parameters and/or
verified by the substitution. Review of other integrable second-order
differential equations is given in \cite{Ermakov}.

\end{document}